\pgfplotsset{compat=newest}
\definecolor{darkgray}{RGB}{64,64,64}
\definecolor{litegray}{RGB}{192,192,192}
\tikzstyle{block}=[draw, rectangle, minimum height=1cm, text width=2cm, text centered, draw=darkgray, font=\small]
\tikzstyle{block_medium}=[draw, rectangle, minimum height=1.5cm, text width=2cm, text centered, draw=darkgray, font=\small]
\tikzstyle{block_large}=[draw, rectangle, minimum height=2.5cm, text width=2cm, text centered, draw=darkgray, font=\small]
\tikzstyle{block_small}=[draw, rectangle, minimum height=1cm, text width=1cm, text centered, draw=darkgray, font=\small]
\tikzstyle{line} = [draw, -latex]
\newcommand{\defeq}{\vcentcolon=}
\newtheorem{theorem}{Theorem}
\newtheorem{lemma}[theorem]{Lemma}
\newcommand{\g}{\boldsymbol{g}}
\renewcommand{\gg}{\overline{\g}}
\newcommand{\q}{\boldsymbol{q}}
\newcommand{\A}{\mathcal{A}}
\newcommand{\X}{\mathcal{X}}
\newcommand{\Y}{\mathcal{Y}}
\renewcommand{\u}{\boldsymbol{u}}
\renewcommand{\v}{\boldsymbol{v}}
\renewcommand{\L}{\mathcal{L}}
\newcommand{\x}{\boldsymbol{x}}
\renewcommand{\S}{\boldsymbol{S}}
\newcommand{\T}{\boldsymbol{T}}
\newcommand{\y}{\boldsymbol{y}}
\newcommand{\0}{\boldsymbol{0}}
\newcommand{\s}{\mathrm{supp}}
\newcommand{\err}{\mathrm{erfc}}
\newcommand{\w}{\mathrm{wt}}
\renewcommand{\d}{\mathrm{d}}
\newtheoremstyle{agdTheorem}{\parskip}{\parskip}{\itshape}{\parindent}{\bfseries}{}{0pt}{\thmname{#1}\thmnumber{~#2}.\thmnote{~\textnormal{#3.}}\quad}
\theoremstyle{agdTheorem}
\newtheorem{rmrk}{Remark}
\newtheorem{exmp}{Example}
\newtheoremstyle{agdDefinition}{\parskip}{\parskip}{}{\parindent}{\bfseries}{}{0pt}{\thmname{#1}\thmnumber{~#2}.\thmnote{~\textnormal{#3.}}\quad}
\theoremstyle{agdDefinition}
\title{\LARGE \bf
	Weight Distributions for  Successive Cancellation Decoding of Polar Codes
}
\author{Rina Polyanskaya, Mars Davletshin, and
	Nikita Polyanskii
		\thanks{Rina Polyanskaya is with the Institute for Information Transmission Problems (email: rev-rina@yandex.ru).}
		\thanks{Mars Davletshin is with the Moscow Research Center, Huawei \mbox{Technologies} Co., Ltd  (email: davletshin.mars1@huawei.com).}
	\thanks{Nikita Polyanskii is with the Technical University of Munich. The research was conducted in part during May - October 2017 with the Moscow Research Center, Huawei Technologies Co., Ltd
		(email: nikitapolyansky@gmail.com). N. Polyanskii was funded in part by the German Research Foundation (Deutsche Forschungsgemeinschaft, DFG) under Grant No. WA3907/1-1. }%
}
\begin{document}

	\maketitle
	\thispagestyle{empty}
	\pagestyle{empty}

	\begin{abstract}
		
		In this paper, we derive the exact weight distributions that emerge during each stage of successive cancellation decoding of polar codes. Though we do not compute the distance spectrum of polar codes, the results allow us to get an estimate of the decoding error probability and to show a link between the first nonzero components of the weight distribution and the partial order between the synthetic channels.  Also, we establish the minimal distance between two cosets associated with two paths that differ in two positions. This can be regarded as a first step toward analyzing the weight distributions for  successive cancellation list decoding. 
		
	\end{abstract}
	
		\begin{IEEEkeywords}
		Polar codes, weight distribution,
		closest coset decoding, partial order, successive cancellation decoding.	
	\end{IEEEkeywords}
	\section{Introduction}
	Polar codes, introduced by Ar{\i}kan~\cite{arikan2009channel}, provably achieve the symmetric capacity of any binary-input memoryless  symmetric channels (B-MSC) with encoding and decoding complexity $\Theta(N \log N)$, where $N$ is the block length of the code.   Within the 5G standardization process, polar codes have been adopted as channel codes for uplink and downlink control information of the enhanced mobile broadband communication service.
	
	Multilevel codes are based on partitioning and, thus, multistage decoding is the most natural one to be performed~\cite{pottie1989multilevel}. Polar codes with  successive cancellation (SC) decoding can be represented in this way~\cite{arikan2009channel,trifonov2012efficient}. A typical multilevel code construction employs small codes to get a larger one. Polar codes are obtained by taking a Kronecker power of a square kernel matrix and expurgating some rows using a specific criterion. It is known~\cite[Section X]{arikan2009channel},\cite{arikan2008performance} that polar codes with good distance properties turn out to have a poor performance under SC decoding.
	To evaluate the error rate provided by a multistage decoder, it is natural to calculate the weight distribution (WD) between cosets (or the distance spectrum of component codes) at all the stages~\cite{herzberg1997spectrum}. However, only the minimal distance for  SC decoding of polar codes is known at present~\cite[Lemma~6.2]{korada2009phd}. The aim of our paper is to calculate the WDs at all the stages of  SC decoding.
	
	We want to point out that we do not compute the distance spectrum of polar codes in this paper. Attempts pursuing the latter were undertaken by many authors. For example, the authors of~\cite{liu2014distance} proposed to send the all-zero codeword over a noisy channel and decode the received vector utilizing  the successive cancellation list decoder~\cite{tal2015list}. After decoding, the weights of codewords from the list are calculated and a special weight function is updated. The paper~\cite{valipour2013probabilistic} suggests a way how to compute a probabilistic weight distribution expression efficiently. The authors of~\cite{zhang2017enhanced} put forward an idea how to get an approximate distance spectrum of polar codes with large length using the  spectrum of short ones and some probabilistic assumption on appearing ones in codewords.	
	\subsection{Outline}
	The rest of the paper is organized as follows. In Section~\ref{defSect}, we give key definitions and notations of polar codes and the WDs associated with SC decoding. We derive the WDs  and focus our attention on their first nonzero component in Section~\ref{wd for sc}. To obtain an algorithm computing the WDs, we exploit a similar idea as in~\cite{fossorier1997weight}, where a $|u|u+v|$ construction is investigated. Also, we find a natural connection between the first nonzero component of the WDs and the partial order~\cite{schurch2016partial,bardet2016algebraic}.  The minimal distance between cosets for SC list decoding is discussed in Section~\ref{wd for SCL}. Finally, we conclude with some open problems in Section~\ref{conclusion}.
	\section{Notations and Definitions}\label{defSect}
	For simplicity of presentation we shall use zero-based numbering. A vector of length $N$ is treated as a row and denoted by bold lowercase letters, such as $\x$ or $\x_0^{N-1}$,
	and the $i$th entry of the vector $\x$ is referred to as $x_i$. 
	Given a binary vector $\x$, we define its support $\s(\x)$ as the set of coordinates in which the vector $\x$ has nonzero entries. Let $\d(\x,\y)$ be the Hamming distance between  $\x$ and $\y$, and $\w(\x)$ be the Hamming weight of $\x$. The set of integers from $i$ to $j-1$, $0\le i< j$, is abbreviated by $[i,j)$ or simply $[j-1]$ if $i=0$. Clearly, $\w(\x)=\d(\x,\0)$, where $\0$ is the all-zero vector. Let $(\textbf{x}, \textbf{y})$ denote the concatenation of two vectors $\textbf{x}$ and $\textbf{y}$. Given an $(N\times N)$ binary matrix $X$ and $\A\subset[0,N)$, we write $X(\A)$ to denote the  $(|\A|\times N)$ submatrix  of $X$ formed by the rows of $X$ with indices in $\A$.    
	
	Let $W : \X \to \Y$ be a B-MSC  with input alphabet
	$\X = \{0, 1\}$, output alphabet $\Y$, and transition probabilities
	$W(y|x)$ for $x \in \X$ and $y \in \Y$. By $W^N$ we denote the vector
	channel corresponding to $N$ independent copies of $W$, i.e., $W^N : \X^N \to
	Y^N$ with transition probabilities 
	$$
	W^N(\y_0^{N-1}| \x_0^{N-1}) = \prod_{i=0}^{N-1}W(y_i | x_i).
	$$	
	Ar{\i}kan used a construction based on the following kernel matrix 
	$$
	G_{2}\defeq\begin{pmatrix} 
	1 &0\\ 
	1 &1
	\end{pmatrix}.
	$$ 
	Given $N= 2^n$, we consider the $(N\times N)$ binary matrix $G_N: = G_2^{\otimes n}$ by performing the $n$th Kronecker power of $G_2$. We denote the $i$th row of $G_N$ by $\g_i$. Usually a linear mapping $\x=\x(\u):\, \X^n\to\X^n$ is defined by  
	$$
	\x = \u B_N G_N,
	$$
	  where $B_N$ is the $(N\times N)$ bit-reversal permutation matrix defined in~\cite[Section VII-B]{arikan2009channel}, and the vectors $\x$, $\u$, and the vector space $\X^n$ are over $GF(2)$. However, since $B_N G_N = G_N B_N$~\cite[Proposition 16]{arikan2009channel}, the latter being a simple permutation on $\x$, we can dispense with $B_N$ in this paper and assume
	\begin{equation}\label{mapping}
	\x = \u G_N.
	\end{equation}
	Let us produce a vector
	channel $W_N : \X^N \to \Y^N$ as follows
	$$
	W_N(\y|\u) \defeq W^N(\y|\u G_N) = W^N(\y|\x).
	$$	
	Given $i\in[0,N)$, we define the synthetic channel $W_N^{(i)}:\X\to\Y^N\times \X^{i}$ as
	$$
	W_N^{(i)}(\y,\u_0^{i-1}|u_i):= \sum\limits_{\u_{i+1}^{N-1}\in\X^{N-i-1}}\frac{1}{2^{N-1}}W_N(\y|\u).
	$$
	\subsection{Polar Coding}
	The generator matrix of a polar code is given by $G_N(\A)$ for some set $\A\subset[0,N)$, which is referred to as the information set. The indices $\A^c :=[0,N)\setminus \A$ are usually called frozen and chosen carefully according to the reliabilities of the synthetic channels~\cite{arikan2009channel}. Namely, in the symmetric channel case, any message $\u\in\{0,1\}^N$ has $u_i=0$ for all $i\in\A^c$, and is mapped to the codeword $\x$ by~\eqref{mapping}.

	Let $\x$
	be sent over $W^N$, and
	let a channel output $\y$
	be received. Given $\A$ and $\y$, the decoder generates an estimate $\hat \u$
	of $\u$. We shall briefly describe SC decoding as the sequential use of the closest coset decoding~\cite{hemmati1989closest}.  
	
	For any binary vector $\v\in\{0,1\}^i$, let the set $C^{(n)}(\v)$ induced by $\v$ be defined as follows
	$$
	C^{(n)}(\v)\defeq\sum_{j\in \s(\v)}\g_{j} +\langle \g_{i},\ldots,\g_{N-1}\rangle,
	$$
	where $\langle\cdot\rangle$ is a linear span of a set of vectors.  By
	\begin{subequations}
		\begin{align*}    
		&C^{(n)}(\v,0)\defeq\sum_{j\in \s(\v)}\g_{j} +\langle \g_{i+1}, \ldots, \g_{N-1}\rangle,\\
		&C^{(n)}(\v,1)\defeq\g_{i} + C^{(n)}(\v,0),
		\end{align*}
	\end{subequations}
	define the zero and the one cosets induced by $\v$, respectively. Obviously, the disjoint union of the zero and the one cosets coincides with $C^{(n)}(\v)$. 
	
	At the beginning of the $i$th stage of SC decoding, we are given  a binary vector $\hat \u^{i-1}_0\in\{0,1\}^{i}$, which can be treated as an estimate of $\u_0^{i-1}$. If $i\in\A^c$, then the decoder makes a bit decision $\hat u_{i} = 0$. Otherwise, the decoder computes the values $W_N^{(i)}(\y,\hat\u_0^{i-1}|0)$ and $W_N^{(i)}(\y,\hat\u_0^{i-1}|1)$, where the value $W_N^{(i)}(\y,\u_0^{i-1}|u_i)$ and the set $C^{(n)}(\u_0^{i-1}, u_i )$ are connected by 
	\begin{equation}\label{calculate probabilities}
	W_N^{(i)}(\y,\u_0^{i-1}|u_i)=\frac{1}{2^{N-1}}\sum\limits_{\v\in C^{(n)}( \u_0^{i-1}, u_i)} W^N(\y|\v).
	\end{equation}
	Then the decoder makes a bit estimate $\hat u_{i}$ of $u_i$: $\hat u_{i} = 0$ if $W_N^{(i)}(\y,\hat\u_0^{i-1}|0)> W_N^{(i)}(\y,\hat\u_0^{i-1}|1)$, and $\hat u_{i} = 1$ if $W_N^{(i)}(\y,\hat\u_0^{i-1}|0)<W_N^{(i)}(\y,\hat\u_0^{i-1}|1)$. For the case $W_N^{(i)}(\y,\hat\u_0^{i-1}|0)= W_N^{(i)}(\y,\hat\u_0^{i-1}|1)$, the decoder chooses the value of  $\hat u_{i}$  randomly and uniformly. This decision rule can be seen as choosing the ``closest'' (zero or one) coset to the received $\y$.  If the wrong coset 
	is selected at some decoding stage, then this decoding error is
	propagated to the next stages.
	\subsection{Weight Distribution} 
	Without loss of generality, we assume that the all-zero codeword is transmitted (e.g., see~\cite[Section VI]{arikan2009channel}, i.e., $\u = \x = \0$. At the $i$th stage, an error occurs if the decoder selects $C^{(n)}(\0_0^{i-1}, 1)$ instead of $C^{(n)}(\0_0^{i-1}, 0)$
	(in the very beginning $C^{(n)}(1)$ instead of $C^{(n)}(0)$, respectively). Let us introduce the weight distribution for SC decoding. For $i\in[0, N)$ and $w\in[N]$, let $S_{i,w}^{(n)}$ be the number of words of weight $w$ in $C^{(n)}(\0_0^{i-1},1)$ and $\S_{i}^{(n)}:=\left(S_{i,w}^{(n)}\right)_{w=0}^{N}$.
	\begin{exmp}
In Table~\ref{tab::weight distributions example} we illustrate weight distributions $\S_{i}^{(n)}$ for $n=3$ and different $i$'s.	Rows in this table correspond to values $i$'s and columns represent weights $w$'s, whereas the entry at position $(i,w)$ is $S_{i,w}^{(n)}$. One can easily check that the weight distributions are symmetric in $w$, i.e. $S_{i,w}^{(n)}=S_{i,N-w}^{(n)}$ for all $i$ except the case $i=N - 1$. This holds since $\g_{N-1}$ is the all-one vector.
	\begin{table}
		\caption{Weight distributions $\S_{i}^{(n)}$ for $n=3$}
		\begin{center}
			\begin{tabular}{|c||c|c|c|c|c|c|c|c|c|}
				\hline
				\multirow{ 2}{*}{Rows $i$}	& \multicolumn{9}{c|}{Weights $w$}\\
				\cline{2-10}
				& 0 & 1 & 2 &  3& 4 & 5 & 6 & 7 & 8 \\ 
				\hhline{|=||=|=|=|=|=|=|=|=|=|} 
				0& 0 & 8 & 0 & 56 & 0 & 56 & 0 & 8 & 0 \\ 
				\hline 
				1& 0 & 0 & 16 & 0 & 32 & 0 &  16 &  0&  0\\ 
				\hline 
				2& 0 & 0 & 8 & 0 & 16 & 0 &  8 &  0&  0\\ 
				\hline 
				3& 0 & 0 & 0 & 0 & 16 & 0 & 0 & 0 & 0 \\ 
				\hline 
				4& 0 & 0 & 4 & 0 & 0 & 0 & 4 & 0 & 0 \\ 
				\hline 
				5& 0 & 0 & 0 & 0 & 4 & 0 & 0 & 0 & 0 \\ 
				\hline 
				6& 0 & 0 & 0 & 0 & 2 & 0 & 0 & 0 & 0 \\ 
				\hline 
				7& 0 & 0 & 0 & 0 & 0 & 0 & 0 & 0 & 1 \\ 
				\hline 
			\end{tabular} 
		\label{tab::weight distributions example}
		\end{center} 
	\end{table}
\end{exmp}
\subsection{Approximate Upper Bound on the Error Probability}
The weight distribution is useful for obtaining
upper bounds of the error probability. One approximate bound is introduced in this subsection. Let $P_e(i)$ be the error probability at the $i$th decoding stage, i.e., 
\begin{align*}
P_e(i)&=\Pr\left\{W_N^{(i)}(\y,\0_0^{i-1}|0) < W_N^{(i)}(\y,\0_0^{i-1}|1)\right\}\\&+\Pr\left\{W_N^{(i)}(\y,\0_0^{i-1}|0) = W_N^{(i)}(\y,\0_0^{i-1}|1)\right\}/2.
\end{align*}
 This implies by~\eqref{calculate probabilities}
$$
P_e(i)\le \Pr\left\{W^N(\y|\0)\le\sum\limits_{\v\in C^{(n)}(\0_0^{i-1}, 1)} W^N(\y|\v)\right\}.
$$
	 For the binary phase-shift keying (BPSK) transmission over the additive white Gaussian noise (AWGN) channel with small enough variance $\sigma^2$, the right-hand side of the inequality above can be well-approximated by 
	 $$
	 \Pr\left\{W^N(\y|\0)\le \max\limits_{\v\in C^{(n)}(\0_0^{i-1}, 1)} W^N(\y|\v)\right\},
	 $$
	 whereas the latter is upper bounded~\cite[Equations (4.109) and (4.110)]{jacobs1965principles} by the union bound
	\begin{equation}\label{union bound}
	P_{ub}(i)\defeq\sum\limits_{w = 1}^N \frac{1}{2} S_{i,w}^{(n)} \err\left(\sqrt{w/(2\sigma^2)}\right).
	\end{equation}
	Here, $\err(\cdot)$ is the complementary error function  defined by
	$$
	\err(x)\defeq\frac{2}{\sqrt{\pi}}\int_{x}^{\infty}e^{-t^2}dt.
	$$
	 
	It is worth noting that there are several techniques allowing to calculate $P_e(i)$ with inherent inaccuracy and to bound $P_e(i)$. Among them are density evolution (DE)~\cite{mori2009performance}, degrading and upgrading algorithms~\cite{tal2013construct} and Gaussian approximation~\cite{trifonov2012efficient}. 
	\section{Weight Distribution for Successive Cancellation Decoding}\label{wd for sc}
	In this section, we first provide an algorithm for efficiently computing the WDs for  SC decoding. After that, we derive a short formula for computing the first nonzero component of the WD. Finally, a link between the partial order and the first nonzero component of the WD is shown.
	\subsection{Algorithm for Computing the Weight Distributions}
	Our analysis in this subsection is similar to one in~\cite[Section $2$]{fossorier1997weight}, where the WD for the closest coset decoding of $|u|u+v|$ construction was established.
	Let us determine $S_{i,w}^{(n)}$, the number of words of weight $w$ in $C^{(n)}(\0_0^{i-1},1)$. First, we recall that $G_N = G_2\otimes G_{N/2}$ with $N/2 = 2^{n-1}$, i.e., 
	$$
		G_N = \begin{pmatrix}
		G_{N/2} & 0 \\
		G_{N/2} & G_{N/2}
	\end{pmatrix} .
	$$
	
	 So if $i\ge 2^{n-1}$, then any word in $C^{(n)}(\0_0^{i-1},1)$ represents a repetition of some word in $C^{(n-1)}(\0_0^{i-1-N/2},1)$. Thus, $S_{i,w}^{(n)}=0$ for odd $w$, and $S_{i,w}^{(n)}=S_{i-N/2,w/2}^{(n-1)}$ for even $w$.  If $i< 2^{n-1}$, then any word $\x\in C^{(n)}(\0_1^{i-1},1)$ can be uniquely represented in the form
	\begin{equation}\label{basic lemma 1}
	\x = \left(\g_i+\sum\limits_{j\in I_{1}}\g_j\right)+\sum\limits_{j\in I_{2}}\g_j = (\x_1,\0) + (\x_2,\x_2),
	\end{equation}
	where the index sets $I_{1}\subset [i+1,2^{n-1})$ and $I_{2}\subset [2^{n-1},2^{n})$, and $\x_1\in C^{(n-1)}(\0_0^{i-1},1)$ and $\x_2\in\{0,1\}^{N/2}$. Moreover, since $G_{N/2}$ is full-rank, any $\x_2 \in \{0, 1\}^{N/2}$ can be represented as a sum of $\g_j$'s. So, any combination of $\x_1\in C^{(n-1)}(\0_0^{i-1},1)$ and $\x_2\in\{0,1\}^{N/2}$ gives a unique $\x\in C^{(n)}(\0_0^{i-1},1)$. Hence, $S_{i,w}^{(n)}$ can be determined using the following statement.
	\begin{theorem}\label{th::key theorem weight distribution}
		For $t\in\{0,\ldots,N/2-w'\}$, the contribution of $\x_1$ with $\w(\x_1)=w'$ to $S^{(n)}_{i,w'+2t}$ is $2^{w'}\binom{N/2 - w'}{t}$. 
	\end{theorem}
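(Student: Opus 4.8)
The plan is to argue directly from the decomposition~\eqref{basic lemma 1}. Fix $\x_1\in C^{(n-1)}(\0_0^{i-1},1)$ with $\w(\x_1)=w'$; by the discussion preceding the statement, as $\x_2$ ranges over $\{0,1\}^{N/2}$ the vector $\x=(\x_1+\x_2,\x_2)$ ranges without repetition over exactly those words of $C^{(n)}(\0_0^{i-1},1)$ that are associated with this particular $\x_1$. Since the two halves of $\x$ occupy disjoint coordinates, $\w(\x)=\w(\x_1+\x_2)+\w(\x_2)$, so it suffices to count the $\x_2$ for which $\w(\x_1+\x_2)+\w(\x_2)=w'+2t$.

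First I would split the $N/2$ coordinates into $\s(\x_1)$, of size $w'$, and its complement, of size $N/2-w'$, and examine the per-coordinate contribution of $\x_2$ to the sum $\w(\x_1+\x_2)+\w(\x_2)$. For a coordinate $j\in\s(\x_1)$, both choices $x_{2,j}=0$ and $x_{2,j}=1$ contribute exactly $1$: in one case the bit of $\x_1+\x_2$ is $1$ and that of $\x_2$ is $0$, in the other it is the reverse. For a coordinate $j\notin\s(\x_1)$, the choice $x_{2,j}=0$ contributes $0$, while $x_{2,j}=1$ makes both bits equal to $1$ and so contributes $2$. Hence the $w'$ coordinates lying in $\s(\x_1)$ contribute a fixed total of $w'$ and may be filled in $2^{w'}$ ways, whereas the contribution of the remaining $N/2-w'$ coordinates equals twice the number of them that are set to $1$.

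Consequently $\w(\x)=w'+2t$ holds precisely when exactly $t$ of the $N/2-w'$ coordinates outside $\s(\x_1)$ are set to $1$, which can be arranged in $\binom{N/2-w'}{t}$ ways, and these choices are independent of the $2^{w'}$ choices on $\s(\x_1)$. Multiplying gives the claimed count $2^{w'}\binom{N/2-w'}{t}$, and the admissible range of $t$ is $\{0,\dots,N/2-w'\}$, since that is how many coordinates outside $\s(\x_1)$ are available. Summing this contribution over all $\x_1\in C^{(n-1)}(\0_0^{i-1},1)$, grouped according to their weight $w'$, then yields $S_{i,w}^{(n)}$.

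I do not expect a genuine obstacle here: the argument is a finite case analysis plus a product count. The one point that needs care is verifying that every coordinate of $\s(\x_1)$ contributes exactly $1$ regardless of $\x_2$, since this is what shifts the weight by precisely $w'$ and decouples the two coordinate blocks; everything else is routine bookkeeping.
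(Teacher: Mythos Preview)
Your argument is correct and is essentially the same as the paper's: the paper also starts from the decomposition~\eqref{basic lemma 1}, writes $\w(\x)=\w(\x_2)+\d(\x_1,\x_2)=\w(\x_1)+\bigl(\w(\x_2)+\d(\x_1,\x_2)-\w(\x_1)\bigr)$, observes that the parenthesized term is twice the number of coordinates where $x_{1,j}=0$ and $x_{2,j}=1$, and then counts exactly as you do. Your per-coordinate case analysis is just a slightly more explicit presentation of that same computation.
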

	\begin{proof}[Proof of Theorem~\ref{th::key theorem weight distribution}]
		Let $\x \in C^{(n)}(\textbf{0}^{i-1}_0, 1)$. By~\eqref{basic lemma 1}, it is easy to check that 
		\begin{align}
		\w(\x)= &\,\w(\x_2) + \d(\x_1,\x_2)=
		\w(\x_1) \notag \\ 
		+&\, (\w(\x_2) + \d(\x_1,\x_2)-\w(\x_1))\ge \w(\x_1). \label{basic lemma 2}
		\end{align}
		We observe that the sum in the parentheses is equal to double the number of coordinates $i$ so that $x_{1,i}=0$ and $x_{2,i}=1$. Given $\x_1$ with $\w(\x_1)=w'$, there are $\binom{N/2-w'}{t}$ different choices for placing $t$ ones in $\x_2$ among $N/2 - w'$ coordinates corresponding to zeros in $\x_1$.  By~\eqref{basic lemma 1}, $\x_2$ could have anything in the remaining $w'$ coordinates corresponding to ones in $\x_1$. Therefore the total number of choices for $\x_2$ is $2^{w'}\binom{N/2 - w'}{t}$. 
	\end{proof} 
	Summarizing the arguments given above, the WDs can be calculated in a recursive manner with the help of Algorithm~\ref{algorithm for computing}. We start the algorithm by initializing  $\S_0^{(0)}=(0,1)$ as for the base case, we have $G_0=(1)$ and $C^{(0)}(1)=\{(1)\}$.
	\begin{rmrk} \label{rem::computing wd for zero coset}
		 Let $T^{(n)}_{i,w}$ denote the number of words of weight $w$ in $C^{(n)}(\0_0^{i-1},0)$. 
		 The WDs for the zero coset can be obtained by Algorithm~\ref{algorithm for computing} by only changing the initialization step from $\S_0^{(0)}:=(0,1)$ to $\T_0^{(0)}:=(1,0)$ as $C^{(0)}(0)=\{(0)\}$. Let us briefly check this. If $i\ge 2^{n-1}$, then any word in $C^{(n)}(\0_0^{i-1},0)$ represents a repetition of some word in $C^{(n-1)}(\0_0^{i-1-N/2},0)$. Thus, $T_{i,w}^{(n)}=0$ for odd $w$, and $T_{i,w}^{(n)}=T_{i-N/2,w/2}^{(n-1)}$ for even $w$.  If $i< 2^{n-1}$, then any word $\x\in C^{(n)}(\0_1^{i-1},0)$ can be uniquely represented in the form
		\begin{equation*}
		\x = \left(\sum\limits_{j\in I_{1}}\g_j\right)+\sum\limits_{j\in I_{2}}\g_j = (\x_1,\0) + (\x_2,\x_2),
		\end{equation*}
		where the index sets $I_{1}\subset [i+1,2^{n-1})$ and $I_{2}\subset [2^{n-1},2^{n})$, and $\x_1\in C^{(n-1)}(\0_0^{i-1},0)$ and $\x_2\in\{0,1\}^{N/2}$. Moreover, 	for $t\in\{0,\ldots,N/2-w'\}$, the contribution of $\x_1$ with $\w(\x_1)=w'$ to $T^{(n)}_{i,w'+2t}$ is again $2^{w'}\binom{N/2 - w'}{t}$.	
	\end{rmrk}
	\begin{algorithm}[t]
		\caption{Computing the weight distributions}
		\label{algorithm for computing}
		\begin{algorithmic}[1]
			\renewcommand{\algorithmicrequire}{\textbf{Input:}}
			\renewcommand{\algorithmicensure}{\textbf{Output:}}
			\REQUIRE length $N=2^n$
			\ENSURE  weight distributions $\left\{\S_{i}^{(n)}:\,i\in[0,N)\right\}$
			\\ \textit{Initialization} :
			\STATE     $\S_0^{(0)}\gets \left(S_{0,0}^{(0)}, S_{0,1}^{(0)} \right)$ with $S_{0,0}^{(0)}\gets 0$ and $S_{0,1}^{(0)}\gets 1$
			\FOR{Kronecker's power $j = 1$ to $n$}
			\STATE
			$N_j\gets 2^j$
			\FOR{row index $i = 0$ to $N_j-1$}\label{al::main loop}
			\IF {($i < N_{j}/2$)}
			\FOR{weight $w = 0$ to $N_j$} \label{al::first for loop}
			\STATE
			$S^{(j)}_{i,w}\gets \sum\limits_{\substack{0\le w'\le \min(w,N_{j}/2) \\ w' \equiv  w\pmod{2}}} S^{(j-1)}_{i,w'} \binom{N_{j}/2-w'}{(w-w')/2} 2^{w'}$ \label{al::formula first part}\\
			\ENDFOR
			\ELSE
			\FOR{weight $w = 0$ to $N_j/2$} \label{al::second for loop}
			\STATE $S^{(j)}_{i,2w}\gets S^{(j-1)}_{i-N_j/2,w}$ \label{al::formula second part}\\
			$S^{(j)}_{i,2w+1}\gets 0$
			\ENDFOR
			\ENDIF
			\ENDFOR
			\ENDFOR
			\RETURN $\left\{\S_{i}^{(n)}:\,i\in[0,N)\right\}$
		\end{algorithmic} 
	\end{algorithm}

	\begin{rmrk}  Algorithm~\ref{algorithm for computing} provides a practical way to determine WDs associated with SC decoding.  Let us take some $p\ge 1$ and assume that we have a precomputed look-up table of appropriate binomial coefficients. Then the complexity of computing the first $p$ nonzero components of $\S_{i}^{(n)}$ for all $i\in[0,N)$ is $O(p^2 N)$. Indeed, the complexity of the for-loops on lines~\ref{al::first for loop} and~\ref{al::second for loop} can be reduced to $O(p^2)$  and $O(p)$, respectively,  if only the first $p$ nonzero components are computed. Now we check this claim for line~\ref{al::first for loop}. Suppose that the first nonzero component of $\S^{(j-1)}_i$ corresponds to weight $w_i^{(j-1)}$. By line~\ref{al::formula first part}, the weights of the first $p$ nonzero components of $\S_i^{(j)}$ are between $w_i^{(j-1)}$
		and $w_i^{(j-1)}+ 2p - 2$ and to compute one of them
		we need to sum at most $p$ numbers. Hence, the loop on line 6 has complexity $O(p^2)$.
	 Similar arguments work out for line~\ref{al::formula second part}. The total number of required computations of the algorithm is also linear with the number of times when we are in the for-loop on line~\ref{al::main loop}, which is equal to $2N-2$. This implies that for $p=O(1)$, the
	complexity of getting approximate upper bounds on the error probabilities for all the subchannels based on the formula~\eqref{union bound}, restricted to the first  $p$ nonzero terms, is $O(N)$.
	\end{rmrk}
	\begin{exmp}
		Let us illustrate the bound~\eqref{union bound} by taking code length $N=256$ and the AWGN channel with variance $\sigma^2=0.158$ (SNR is around $8.01$ dB). Using Algorithm~\ref{algorithm for computing} we compute the weight distributions and depict the pairs $(P_{e}(i),P_{ub}(i))$  in Figure~\ref{Bounding Bad Probability of Channel 1}, where the approximate union bound $P_{ub}(i)$  on the decoding error probability $P_{e}(i)$ is computed with the help of~\eqref{union bound}. We compute $P_{e}(i)$ with the help of DE. 
		\begin{figure}
			\centering
			\begin{minipage}{.49\textwidth}
			\begin{tikzpicture}
			\begin{loglogaxis}[xlabel={$P_e(i)$}, ylabel={$P_{ub}(i)$}, xmin=1e-10, xmax=1, ymin=1e-10, ymax=1]
			
			\addplot[smooth, thick, dotted, color=black] coordinates {
				(1e-10, 1e-10)
				(1,1)
			};
			
			\foreach \Point in {(0.6,0.6),(0.4,0.6),(0.3,0.6),(0.21,0.51),(0.08,0.1),(0.12,0.17),(1.5e-10,1.5e-10),(3e-10,3e-10),(8e-10,8e-10),(1.2e-9,1.2e-9),(2e-10,2e-9),(2e-9,2e-9),(1.4e-9,2e-9),(3.4e-9,4e-9),(4.5e-9,4.5e-9),(1.2e-9,1.5e-8),(1e-8,1e-8),(0.8e-8,1e-8),(2.1e-8,2.1e-8),(1.6e-8,2.05e-8),(4.3e-8,4.9e-8),(5e-8,5e-8),(8e-8,9e-8),(9e-8,10e-8),(0.8e-7,8e-7),(2e-7,2e-7),(1.4e-7,1.9e-7),(2e-7,4e-7),(3e-7,4e-7),(5e-7,6.5e-7),(7e-7,7.5e-7),(1e-6,1.5e-6),(1.1e-6,1.5e-6),(1.5e-6,2.5e-6),(3e-6,4e-6),(4e-6,7e-6),(8e-6,2e-5),(3e-5,3e-5),(4e-5,4e-5),(9e-5,9e-5),(1e-4,1e-4),(2e-4,2e-4),(2.1e-4,2.1e-4),(2.15e-4,2.15e-4),(4e-4,4e-4),(4.2e-4,4.2e-4),(1e-3,1e-3),(0.8e-3,1e-3),(2e-3,3e-3),(2.5e-3,3e-3),(4e-3,5e-3),(2e-2,2e-2),(4e-2,4e-2),(2e-1,5e-1),(1e-1,4e-1),(0.7e-1,3e-1),(0.4e-1,2e-1),(0.5e-1,2.5e-1),(0.2e-1,1e-1),(0.25,4.5e-1),(2e-2,1.1e-1),(1e-2,8e-2),(0.8e-2,5e-2),(0.7e-2,2e-2),(0.5e-2,1.8e-2),(0.5e-2,1e-2),(0.4e-2,0.9e-2),(1e-5,5e-5),(1.5e-5,7e-5),(3e-5,9e-5),(4e-5,2e-4),(1e-4,7e-4),(2e-4,8e-4),(4e-4,1e-3),(5e-4,1.7e-3),(0.8e-5,3e-5),(0.6e-5,1.2e-5),(1e-3,1.4e-3),(1e-6,1.7e-6),(5e-7,9e-7),(7e-6,9e-6),(1e-4,4e-4),(0.8e-4,3.7e-4)}{
				\addplot[mark=o, color=NavyBlue] coordinates{\Point};
			}
			\end{loglogaxis}
			\end{tikzpicture}
			\caption[My plot]{The approximate union bound  $P_{ub}(i)$ and the bit-channel probabilities of error $P_e(i)$. 
				Only those pairs $(P_e(i),P_{ub}(i))$ for which the error probability $P_e(i)$ is greater than $10^{-10}$
				threshold are shown. Additionally, the dashed line $(x,x)$ for $x\in[10^{-10},10^0]$ is plotted.}
			\label{Bounding Bad Probability of Channel 1}
		\end{minipage}
	\begin{minipage}{.49\textwidth}
		\begin{tikzpicture}
		\begin{semilogyaxis}[xlabel={Signal-to-noise ratio [dB]}, ylabel=Bit Error Rate, xmin=3, xmax=8, ymin=0.00001, ymax=1, grid=both,legend style={font=\small}]
		
		\addplot[thick,smooth, color=red] coordinates {
			(3, 0.2)
			(4, 0.07)
			(5, 0.02)
			(6, 0.004)
			(7, 0.0004)
			(8, 0.00003)
		};
		\addlegendentry{DE}
		\addplot[smooth, thick, mark=o, color=NavyBlue] coordinates {
			(3, 0.8)
			(4, 0.15)
			(5, 0.03)
			(6, 0.0045)
			(7, 0.00041)
			(8, 0.00003)
		};
		\addlegendentry{UB}
		\end{semilogyaxis}
		\end{tikzpicture}
		\caption[Union and sphere bounds on $P_e(72)$ under SC]{The DE computations and the approximate union bound on $P_e(72)$ under SC decoding.}
		\label{Bounding Bad Probability of Channel 2}
		\end{minipage}
		\end{figure}

	\end{exmp}

	\begin{exmp}
	Now let us take the code length $N=128$ and the synthetic channel with index $i=72$. Using Algorithm~\ref{algorithm for computing} we compute the weight distributions and depict $P_{ub}(72)$ the approximate union bound~\eqref{union bound} on the decoding error probability $P_{e}(72)$ along with this probability, calculated with the help of DE, in Figure~\ref{Bounding Bad Probability of Channel 2}. 
\end{exmp}	
	\subsection{First Nonzero Component}
	Let  $s_i^{(n)}$ be the first nonzero component of $\S_{i}^{(n)}$. It is known (e.g., see~\cite[Chapter 6]{korada2009phd}) that the first nonzero component of $\S_{i}^{(n)}$ corresponds to weight $\w(\g_i)$ and, thus, $s_i^{(n)} = S_{i, \w(\g_i)}^{(n)}>0$, i.e.,
		$$
	\S_{i}^{(n)}=(\underbrace{0,0,\ldots, 0}_{\w(\g_i)},s_i^{(n)},\ldots).
		$$
	We extend this line of research and find an explicit formula for $s_{i}^{(n)}$. Given $j\in[0,n)$ and $i\in[0,N)$, let $b_j(i)$ be the $j$th bit in the binary representation of integer $i$, and $p_j(i)$ be the partial sum of the first $j+1$ bits, i.e., integers $i$ and $p_j(i)$ can be represented as 
	\begin{align*}
	i =\sum\limits_{j=0}^{n-1} b_j(i) 2^j, \quad p_j(i)&=\sum\limits_{s=0}^{j} b_s(i).
	\end{align*}
	\begin{theorem}\label{th::first component}
		Given $N = 2^n$ and $i\in[0,N)$, the first nonzero component of $\S_{i}^{(n)}$ corresponds to weight $\w(\g_i)=2^{p_{n-1}(i)}$ and equals $s_{i}^{(n)}$, where 
		\begin{equation}\label{formula for first nonzero component}
		\log_2 s_{i}^{(n)} = \sum_{j=0}^{n-1} (1 - b_j(i))2^{p_j(i)}.
		\end{equation}
	\end{theorem}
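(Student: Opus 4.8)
The plan is to prove the theorem by induction on $n$, taking the inductive hypothesis to be the full statement. The base case $n=0$ holds since $\S_0^{(0)}=(0,1)$, so $s_0^{(0)}=1$, $\w(\g_0)=1=2^{0}$, and the sum in~\eqref{formula for first nonzero component} is empty. I will also use the classical identity (see~\cite[Chapter 6]{korada2009phd}) that $\w(\g_i)=2^{p_{n-1}(i)}$, i.e.\ the weight of the $i$th row of $G_N$ equals $2$ raised to the Hamming weight of the binary expansion of $i$; this is consistent with $\w(\g_i)$ doubling when $i\ge 2^{n-1}$ (its row is $(\g_{i-2^{n-1}},\g_{i-2^{n-1}})$ at level $n-1$) and being unchanged when $i<2^{n-1}$ (its row is $(\g_i,\0)$), together with $p_{n-1}(i)=p_{n-2}(i)+b_{n-1}(i)$. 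The key bookkeeping fact used throughout is that dropping from length $2^n$ to $2^{n-1}$ preserves the low bits $b_0(i),\dots,b_{n-2}(i)$, hence the partial sums $p_0(i),\dots,p_{n-2}(i)$.

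For the inductive step I would fix $i\in[0,2^n)$ and split on the leading bit $b_{n-1}(i)$, mirroring the two cases in the derivation preceding Algorithm~\ref{algorithm for computing}. If $b_{n-1}(i)=1$, put $i':=i-2^{n-1}<2^{n-1}$. Every word of $C^{(n)}(\0_0^{i-1},1)$ is a twofold repetition of a word of $C^{(n-1)}(\0_0^{i'-1},1)$, so $\S_i^{(n)}$ is $\S_{i'}^{(n-1)}$ ``stretched'' to even weights; in particular its first nonzero component is at weight $2\cdot 2^{p_{n-2}(i')}=2^{p_{n-1}(i)}$ and has value $s_i^{(n)}=s_{i'}^{(n-1)}$. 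By the induction hypothesis for $i'$, $\log_2 s_i^{(n)}=\sum_{j=0}^{n-2}(1-b_j(i))2^{p_j(i)}$, and this equals $\sum_{j=0}^{n-1}(1-b_j(i))2^{p_j(i)}$ because the added $j=n-1$ term is $(1-b_{n-1}(i))2^{p_{n-1}(i)}=0$.

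If $b_{n-1}(i)=0$, then $i<2^{n-1}$ and Theorem~\ref{th::key theorem weight distribution} gives $S_{i,w}^{(n)}=\sum_{w'\equiv w\,(2)}S_{i,w'}^{(n-1)}\binom{2^{n-1}-w'}{(w-w')/2}2^{w'}$, a sum of nonnegative terms. Letting $w^\star$ be the first nonzero weight of $\S_i^{(n-1)}$, every $w'<w^\star$ contributes $0$, so $S_{i,w}^{(n)}=0$ for $w<w^\star$, while weight $w^\star$ is reached only via $w'=w^\star$ with $(w-w')/2=0$, contributing $2^{w^\star}\binom{2^{n-1}-w^\star}{0}s_i^{(n-1)}=2^{w^\star}s_i^{(n-1)}>0$ — hence no cancellation, and the first nonzero weight is unchanged. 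By the induction hypothesis (same $i$, length $2^{n-1}$) $w^\star=2^{p_{n-2}(i)}=2^{p_{n-1}(i)}$ since $b_{n-1}(i)=0$, and $\log_2 s_i^{(n)}=w^\star+\log_2 s_i^{(n-1)}=2^{p_{n-1}(i)}+\sum_{j=0}^{n-2}(1-b_j(i))2^{p_j(i)}=\sum_{j=0}^{n-1}(1-b_j(i))2^{p_j(i)}$, the $j=n-1$ term being exactly $2^{p_{n-1}(i)}$ here. This closes the induction.

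I do not anticipate a serious obstacle: the whole argument is dictated by the two structural recursions. The only steps demanding a little care are (i) the observation in the case $i<2^{n-1}$ that nonnegativity of the convolution forbids any cancellation, so that the minimal weight word of the larger coset is exactly a minimal weight $\x_1$ paired with an $\x_2$ that vanishes on the zeros of $\x_1$ and is free on its $w^\star$ ones — the source of the factor $2^{w^\star}$; and (ii) the bit bookkeeping, i.e.\ keeping straight that only the new leading bit $b_{n-1}(i)$ and the identity $p_{n-1}(i)=p_{n-2}(i)+b_{n-1}(i)$ enter at step $n$, which is what makes the single new summand $(1-b_{n-1}(i))2^{p_{n-1}(i)}$ in~\eqref{formula for first nonzero component} come out correctly — vanishing when $b_{n-1}(i)=1$ and contributing the full $2^{p_{n-1}(i)}$ when $b_{n-1}(i)=0$.
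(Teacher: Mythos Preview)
Your proof is correct and follows essentially the same approach as the paper: induction on $n$ with base case $n=0$, splitting the inductive step according to the leading bit, using the repetition structure when $b_{n-1}(i)=1$ to get $s_i^{(n)}=s_{i'}^{(n-1)}$, and using the convolution formula from Theorem~\ref{th::key theorem weight distribution} when $b_{n-1}(i)=0$ to isolate the single term $w'=w^\star$ and obtain $s_i^{(n)}=2^{w^\star}s_i^{(n-1)}$. The only cosmetic difference is that the paper phrases the induction as going from a fixed pair $(n,i)$ to the two pairs $(n+1,i)$ and $(n+1,i+2^n)$, whereas you fix $i\in[0,2^n)$ and branch on $b_{n-1}(i)$; the computations and the bit bookkeeping $p_{n-1}(i)=p_{n-2}(i)+b_{n-1}(i)$ are identical.
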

	\begin{proof}[Proof of Theorem~\ref{th::first component}]
		We shall prove the statement of this theorem by induction on $n$. The base case is evident as $S^{(0)}_{0,1}=1$. We assume for a moment that the equality~\eqref{formula for first nonzero component} holds for some pair $(n,i)$ with  $i\in[0,N)$ . Let $\g_i$ be the $i$th row in the matrix $G_{N}$. We derive the statement of this theorem for $(n+1,i)$ and $(n+1,i+N)$.
		
		 First, let us prove the case $(n+1,i)$. By  line~\ref{al::formula first part} of Algorithm~\ref{algorithm for computing}, 	for $i < N=2^n$, we have
		 \begin{equation}\label{contribution to nonzero component}
		 S^{(n+1)}_{i,w}: = \sum\limits_{\substack{0\le w'\le \min(w,N) \\ w' \equiv  w\pmod{2}}} S^{(n)}_{i,w'} \binom{N-w'}{(w-w')/2} 2^{w'}.
		 \end{equation}
		Therefore, we get that $S_{i,w}^{(n+1)} =0$ for all $w< \w(\g_i)$ since $S_{i,w'}^{(n)} =0$ for all $w'< \w(\g_i)$. For $w=\w(\g_i)$, the nonzero contribution to $S^{(n+1)}_{i,w}$ in the right-hand side of~\eqref{contribution to nonzero component}  comes from the only term indexed by $w'=w$. Thus, we have  $s_i^{(n+1)}=s_i^{(n)}2^{\w(\g_i)}$ and since $b_n(i)=0$ so that $2^{p_n(i)}=2^{p_{n-1}(i)}=\w(\g_i)$,
		\begin{align*}
		\log_2 s_i^{(n+1)} &= \log_2 s_i^{(n)} + \w(\g_i)\\
		 &= \sum_{j=0}^{n-1} (1 - b_j(i))2^{p_j(i)} + \w(\g_i)  \\
		 &= \sum_{j=0}^{n} (1 - b_j(i))2^{p_j(i)}.
		\end{align*}
		  Similarly, we deal with the case $(n+1,i+N)$. Let $i':=i+N\ge N$.  According to line~\ref{al::formula second part} of Algorithm~\ref{algorithm for computing}, we have
	  \begin{equation}\label{eq::second part evaluation}
	  	  S_{i',2w}^{(n+1)}=S^{(n)}_{i,w},\quad S^{(n+1)}_{i',2w+1} = 0.
	  \end{equation}
	  We obtain $S_{i',w}^{(n+1)}=0$ for all $w<2 \w(\g_i)$ since $S_{i,w}^{(n)} =0$ for all $w< \w(\g_i)$. Thus, the first nonzero component of $\S_{i'}^{(n+1)}$ corresponds to the weight $2\w(\g_i)=\w(\g_{i'})$, where $\g_{i'}$ is the $i'$th row in the matrix $$
	  G_{2N}=\begin{pmatrix}G_{N} & 0 \\ G_{N} & G_{N} \end{pmatrix}.
	  $$
	    Moreover, by~\eqref{eq::second part evaluation} we have $s_{i'}^{(n+1)}=s_i^{(n)}$ and since $b_n(i') = 1$
	  \begin{align*}
	  \log_2s_{i'}^{(n+1)}&=\log_2s_{i}^{(n)}\\
	  &=\sum_{j=0}^{n-1} (1 - b_j(i))2^{p_j(i)}\\&=\sum_{j=0}^{n} (1 - b_j(i'))2^{p_j(i')}.
	  \end{align*}
	  This proves the inductive step and completes the proof.
	\end{proof}
	\subsection{Connection with the Partial Order}
	It was observed~\cite{schurch2016partial,bardet2016algebraic} that there is
	a partial order between the synthetic channels, which holds for
	any B-MSC. Let us rephrase  this result using our notation.
	\begin{theorem}[The partial order {\cite[Definition $8$]{schurch2016partial}}] \label{th::partial order}
		 $W^{(i)}_N$ is stochastically degraded by $W^{(j)}_N$ if there exists a finite sequence of integers $a_0,a_1,\ldots,a_\ell\in [0,N)$, $\ell \ge 0$, such that $a_0 = i$, $a_\ell = j$ and for all $m \in [0,\ell)$, one of the following two properties holds:
		 \begin{enumerate}
		 	\item There exist two indices $u=u(m),w=w(m)\in[0,n)$ so that $u<w$ and
		 	\begin{align*}
		 	b_k(a_m)&= b_k(a_{m+1}) \text{ for all }k\in[0,n)\setminus\{u,w\},\\
		 	b_u(a_m) &= b_w(a_{m+1})= 1,\\
		 	b_w(a_m) &= b_u(a_{m+1}) = 0;
		 	\end{align*}
		 	\item  $b_k(a_m)\le b_k(a_{m+1})$ for all $k\in[0,n)$.
		  \end{enumerate}		 
	\end{theorem}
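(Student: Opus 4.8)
Write $A\preceq B$ for ``$A$ is stochastically degraded by $B$''. Since $\preceq$ is transitive it suffices to prove $W_N^{(a_m)}\preceq W_N^{(a_{m+1})}$ for a single step of the chain. A step of the second type (the coordinatewise move $b_k(a_m)\le b_k(a_{m+1})$ for all $k$) merely turns some zero bits of $a_m$ into ones, so it refines into steps flipping exactly one bit; thus I may assume that $a_m$ and $a_{m+1}$ differ either in two bit positions (transposition move) or in a single bit, changed from $0$ to $1$ (single-bit move). Both cases rest on the closest-coset picture of $W_N^{(i)}$: by~\eqref{calculate probabilities} the channel $u_i\mapsto(\y,\u_0^{i-1})$ passes, conditioned on the revealed prefix $\u_0^{i-1}$, a uniformly random element of the coset $C^{(n)}(\u_0^{i-1},u_i)$ through $W^N$, and since $W$ is symmetric the prefix contributes only the known offset $\sum_{j\in\s(\u_0^{i-1})}\g_j$, which one cancels by flipping the corresponding output coordinates of $W^N$. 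Hence, up to channel equivalence, $W_N^{(i)}$ is the channel $u\mapsto W^N(\,\cdot\mid u\,\g_i+Z_i)$ with $Z_i$ uniform over the subspace $\langle\g_{i+1},\ldots,\g_{N-1}\rangle$.

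The single-bit move is the easy case, and I would deduce it from the standard recursive description of the synthetic channels: $W_N^{(i)}$ is obtained from $W$ by composing, in a fixed order, $n$ one-step polarization transforms, one per bit of the binary expansion of $i$, the transform attached to bit $b_k(i)$ being $V\mapsto V^{-}$ if $b_k(i)=0$ and $V\mapsto V^{+}$ if $b_k(i)=1$; both transforms are monotone for $\preceq$, and $V^{-}\preceq V^{+}$ always~\cite{arikan2009channel,korada2009phd}. Flipping $b_k(a_m)$ from $0$ to $1$ replaces one $V\mapsto V^{-}$ step by $V\mapsto V^{+}$ at a fixed nesting level and leaves all other steps unchanged: the inner steps produce one fixed channel $V$, the two synthetic channels in question are then the images of the pair $V^{-}\preceq V^{+}$ under the same (monotone) outer steps, and so $W_N^{(a_m)}\preceq W_N^{(a_{m+1})}$.

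The transposition move is the heart of the argument, and I would prove it by feeding a coordinate permutation into the closest-coset picture. Identifying the $N$ coordinates with $\{0,1\}^n$, one reads off from $G_N=G_2^{\otimes n}$ that $\g_i$ is the indicator vector of $\{v\in\{0,1\}^n:\s(v)\subseteq\s(i)\}$. Let $\rho$ be the permutation of the coordinate set $\{0,1\}^n$ that interchanges the $u$-th and $w$-th bit of the index. Applying $\rho$ to a channel of the form $u\mapsto W^N(\,\cdot\mid u\,\g+Z)$ merely relabels the memoryless output, so it is a channel equivalence, and it carries $\g_\ell$ to $\g_{\rho(\ell)}$, where $\rho(\ell)$ is $\ell$ with bits $u$ and $w$ swapped; in particular $\rho\g_{a_m}=\g_{a_{m+1}}$. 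The key observation is that $\rho$ sends $\langle\g_{a_m+1},\ldots,\g_{N-1}\rangle$ onto a subspace containing $\langle\g_{a_{m+1}+1},\ldots,\g_{N-1}\rangle$: $\rho$ can decrease an index only when that index has $u$-th bit $0$ and $w$-th bit $1$, and then by exactly $2^{w}-2^{u}$, which is $a_{m+1}-a_m$; so $\ell>a_{m+1}$ forces $\rho(\ell)>a_m$. Consequently $W_N^{(a_m)}$ is equivalent to $u\mapsto W^N(\,\cdot\mid u\,\g_{a_{m+1}}+\rho Z_{a_m})$, where $\rho Z_{a_m}$ is uniform over a subspace containing the noise subspace of $W_N^{(a_{m+1})}$. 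It remains to note that enlarging the uniform noise subspace only degrades such a channel: the extra noise is a known offset which, again by the symmetry of $W$, is realized on the output side by flipping coordinates, an explicit degrading map not depending on the input. This gives $W_N^{(a_m)}\preceq W_N^{(a_{m+1})}$ and closes the chain.

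I expect the difficulty to lie entirely in the transposition move: on one side the index bookkeeping (that $\rho$ never drops an index $\ell>a_{m+1}$ to $a_m$ or below, which rests on the exact value $a_{m+1}-a_m=2^{w}-2^{u}$), and on the other side making the closest-coset reduction precise, i.e.\ that the prefix $\u_0^{i-1}$ is genuinely absorbable for symmetric $W$ and that the resulting equivalence behaves well under coordinate permutations. An alternative, closer to the $|u|u+v|$ recursion exploited elsewhere in this section, is to first reduce a general transposition of bits $u<w$ to a sequence of adjacent transpositions (a bubble-sort--type combinatorial claim) and then invoke the single two-step inequality $(V^{-})^{+}\preceq(V^{+})^{-}$; but this inequality itself needs an explicit degrading map on a fourfold output, so the coset/permutation route looks cleaner to carry through.
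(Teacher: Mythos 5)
The paper itself contains no proof of Theorem~\ref{th::partial order}: it is imported, up to notation, from \cite[Definition 8]{schurch2016partial} (see also \cite{bardet2016algebraic}), so there is no in-paper argument to compare against; your proposal is therefore judged against the literature's route, and as far as I can check it is correct and genuinely different. The cited proofs establish the two-letter inequality $(V^{-})^{+}\preceq(V^{+})^{-}$ by an explicit degrading construction (Schürch) or via the decreasing-monomial-code machinery (Bardet et al.) and then reduce a general swap to adjacent ones; you instead handle an arbitrary swap $u<w$ in one stroke through the coset picture $W_N^{(i)}\equiv\bigl[u\mapsto W^N(\cdot\mid u\,\g_i+Z_i)\bigr]$ (valid for a B-MSC after absorbing the prefix via the output symmetry, and after discarding the prefix as an independent output), the bit-swap coordinate permutation $\rho$ with $\rho\,\g_\ell=\g_{\rho(\ell)}$, the index bookkeeping $\ell>a_{m+1}\Rightarrow\rho(\ell)>a_m$ (correct, since $\rho$ can lower an index only by exactly $2^w-2^u=a_{m+1}-a_m$), and the fact that enlarging the uniform noise subspace degrades the channel. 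Each of these steps holds, and your single-bit move via monotonicity of the two polar transforms together with $V^{-}\preceq V^{+}$ is the standard argument. Two details to make explicit in a full write-up: (i) for this paper's non-bit-reversed convention the recursive description applies the transform attached to the most significant bit innermost, so the ``fixed order'' must be pinned down consistently for both indices; (ii) in the last step the added noise is not a ``known offset'' but should be phrased as a randomized degrading channel: choose a complement $U$ with $V'=V''\oplus U$, let the degrader draw a uniform element of $U$ independently of everything and realize it by symmetry flips on the corresponding output coordinates. What your route buys is a uniform treatment of non-adjacent swaps with no need for the two-letter lemma; what it relies on (channel symmetry) is already assumed in the theorem, so nothing is lost.
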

Note that for any such sequence $a_0,a_1,\ldots,a_\ell$, we have the property
$$
\sum\limits_{k=n-1-t}^{n-1} b_k(a_m) \le \sum\limits_{k=n-1-t}^{n-1} b_k(a_{m+1}) \
$$  for all $t\in[0,n),\,m\in[0,\ell)$. In particular, this means  
\begin{equation}\label{technical property for PO}
\sum\limits_{k=n-1-t}^{n-1} b_k(i) \le \sum\limits_{k=n-1-t}^{n-1} b_k(j) \text{ for all }t\in[0,n).
\end{equation}
Theorem~\ref{th::first component} shows us a natural one-way connection between the partial order  given in Theorem~\ref{th::partial order} and the first nonzero components of WDs. Let $i<j$ be fixed. If the $i$th synthetic channel is stochastically degraded by the $j$th one by the partial order, then the property~\eqref{technical property for PO} holds. In particular, this implies that $p_{n-1}(i)\le p_{n-1}(j)$. In case the latter inequality is strict, by Theorem~\ref{th::first component}, we deduce that $\w(\g_i)<\w(\g_j)$ and the first nonzero component of $\S_{j}^{(n)}$ corresponds to a larger weight than  the first nonzero component of $\S_{i}^{(n)}$:
\begin{align*}
	\S_{i}^{(n)}=&(\underbrace{0,\ldots, 0}_{\w(\g_i)},s_i^{(n)},\ldots),\\
	\S_{j}^{(n)}=&(\underbrace{0,\ldots, 0}_{\w(\g_i)},\underbrace{0,\ldots, 0}_{\w(\g_j)-\w(\g_i)},s_j^{(n)},\ldots).	
\end{align*}	

In case $p_{n-1}(i)= p_{n-1}(j)$, we have $\w(\g_i)=\w(\g_j)$ and the inequality~\eqref{technical property for PO} can be rewritten in a simple way 
\begin{align*}\label{f:ps}
p_t(i)&=p_{n-1}(i)-\sum\limits_{k=t+1}^{n-1} b_k(i) \\&\ge p_{n-1}(j)-\sum\limits_{k=t+1}^{n-1} b_k(j)=p_t(j) 
\end{align*}
for all $t\in[0,n)$.
Now we shall prove that $s_i^{(n)}>s_j^{(n)}$. To this end,  it is enough to show that for $a_m$ and $a_{m+1}$ that fulfill the first property in Theorem~\ref{th::partial order}, we have $s_{a_m}^{(n)} > s_{a_{m+1}}^{(n)}$. Indeed, if two distinct integers $a_m$ and $a_{m+1}$ satisfy the second property in Theorem~\ref{th::partial order}, then 
$$
p_{n-1}(i)\le p_{n-1}(a_m)<p_{n-1}(a_{m+1})=p_{n-1}(j).
$$
By Theorem~\ref{th::first component}, we know
\begin{align*}
\log_2 s_{a_m}^{(n)} =& \sum_{k=0}^{n-1} (1 - b_k(a_m))2^{p_k(a_m)},\\
\log_2 s_{a_{m+1}}^{(n)} =& \sum_{k=0}^{n-1} (1 - b_k(a_{m+1}))2^{p_k(a_{m+1})}.
\end{align*}
By the first property in Theorem~\ref{th::partial order}, 	$b_k(a_m)= b_k(a_{m+1})$ for all $k\in[0,n)\setminus\{u,w\}$ with $u<w$ and $b_u(a_m)=b_w(a_{m+1})=1$ and $b_w(a_m)=b_u(a_{m+1})=0$.  Thus, we obtain
\begin{align*}
\log_2 s_{a_m}^{(n)} &- \log_2 s_{a_{m+1}}^{(n)} \\&= \sum_{k=u+1}^{w} (1 - b_k(a_m))2^{p_k(a_m)} \\&- \sum_{k=u}^{w-1} (1 - b_k(a_{m+1}))2^{p_k(a_{m+1})}\\&=2^{p_w(a_m)} - 2^{p_u(a_{m+1})}\\&+ \sum_{k=u+1}^{w-1} (1 - b_k(a_m))(2^{p_k(a_m)}-2^{p_k(a_{m+1})}) >0,
\end{align*}
where we used the property that $p_k(a_m) = p_k(a_{m+1})+1$ for $k\in[u+1,w)$ and $p_w(a_m)\ge p_u(a_m) = p_u(a_{m+1})+1$.
Combining the arguments above we arrive to the following statement.
\begin{theorem}
 Let $W^{(i)}_N$ be stochastically degraded by  $W^{(j)}_N$ with $i<j$ by Theorem~\ref{th::partial order}. Then one of the two statements holds:
 \begin{enumerate}
 	\item The first nonzero component of $\S_j^{(n)}$  corresponds to a larger weight than the first nonzero component of $\S_i^{(n)}$, i.e., $\w(\g_j)>\w(\g_i)$.
 	\item The first nonzero component of $\S_j^{(n)}$  corresponds to the same weight as the first nonzero component of $\S_i^{(n)}$, i.e., $w:=\w(\g_j)=\w(\g_i)$, and we have $s_i^{(n)}>s_j^{(n)}$. 
 \end{enumerate}
\end{theorem}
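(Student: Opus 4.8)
The plan is to follow a chain of integers witnessing the partial-order relation and to track, step by step, how the suffix sum $p_{n-1}$ and the exponent $\log_2 s^{(n)}_{\cdot}$ supplied by Theorem~\ref{th::first component} change. By Theorem~\ref{th::partial order} there is a sequence $a_0 = i,\,a_1,\,\ldots,\,a_\ell = j$ in which each consecutive pair $(a_m,a_{m+1})$ is either a \emph{swap} (property~1: a $1$ in position $u$ and a $0$ in position $w>u$ are exchanged) or a \emph{domination} step (property~2: $b_k(a_m)\le b_k(a_{m+1})$ for all $k$). As observed right after Theorem~\ref{th::partial order}, every step is non-decreasing in each suffix sum of the binary expansion; taking the full sum gives $p_{n-1}(a_m)\le p_{n-1}(a_{m+1})$, so $p_{n-1}$ is non-decreasing along the chain, and the endpoints satisfy $p_{n-1}(i)\le p_{n-1}(j)$, which is \eqref{technical property for PO} for $t=n-1$.

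If $p_{n-1}(i)<p_{n-1}(j)$, then Theorem~\ref{th::first component} gives $\w(\g_i)=2^{p_{n-1}(i)}<2^{p_{n-1}(j)}=\w(\g_j)$ and the first alternative holds. So assume $p_{n-1}(i)=p_{n-1}(j)$; then $p_{n-1}$ is constant along the whole chain, so $\w(\g_i)=\w(\g_j)$, and it remains to prove $s_i^{(n)}>s_j^{(n)}$. First I would note that a domination step with $a_m\neq a_{m+1}$ has a bit flipping from $0$ to $1$ and hence strictly increases $p_{n-1}$, which is impossible here; therefore every step is either trivial ($a_m=a_{m+1}$) or a swap, and since $i\neq j$ at least one step is a genuine swap. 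A trivial step leaves $s^{(n)}_{\cdot}$ unchanged, so it suffices to prove $s^{(n)}_{a_m}>s^{(n)}_{a_{m+1}}$ for each swap step and then chain the inequalities.

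For a swap step with parameters $u<w$ I would first record the partial-sum shifts: the only bit that has moved is the one from position $u$ to position $w$, so $p_k(a_m)=p_k(a_{m+1})$ for $k<u$ and for $k\ge w$, while $p_k(a_m)=p_k(a_{m+1})+1$ for $u\le k<w$. Substituting into \eqref{formula for first nonzero component} and cancelling the matching terms for $k<u$ and $k>w$, the difference collapses to
\begin{align*}
\log_2 s^{(n)}_{a_m}-\log_2 s^{(n)}_{a_{m+1}} ={}& 2^{p_w(a_m)}-2^{p_u(a_{m+1})} \\
&{}+\sum_{k=u+1}^{w-1}\bigl(1-b_k(a_{m+1})\bigr)\bigl(2^{p_k(a_m)}-2^{p_k(a_{m+1})}\bigr).
\end{align*}
The summation is non-negative, and since $p_k$ is non-decreasing in $k$ we have $p_w(a_m)\ge p_u(a_m)=p_u(a_{m+1})+1$, so $2^{p_w(a_m)}\ge 2\cdot 2^{p_u(a_{m+1})}>2^{p_u(a_{m+1})}$; hence the right-hand side is strictly positive, i.e.\ $s^{(n)}_{a_m}>s^{(n)}_{a_{m+1}}$. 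Chaining over all steps yields $s_i^{(n)}\ge\cdots\ge s_j^{(n)}$ with at least one strict inequality, so $s_i^{(n)}>s_j^{(n)}$.

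The only real difficulty here is careful bookkeeping: getting the partial-sum shifts under a single swap exactly right, verifying the telescoping cancellation in the displayed identity, and supplying the small but essential structural remark that once $p_{n-1}$ is forced to be constant along the chain, no non-trivial domination step can occur, so the relation is built from swaps alone and the exponent in \eqref{formula for first nonzero component} strictly decreases at each genuine step.
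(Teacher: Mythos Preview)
Your proof is correct and follows essentially the same approach as the paper: both arguments walk along the chain $a_0,\ldots,a_\ell$ from Theorem~\ref{th::partial order}, observe that $p_{n-1}$ is non-decreasing and hence constant in the equal-weight case (ruling out non-trivial domination steps), and then for each swap step compute $\log_2 s_{a_m}^{(n)}-\log_2 s_{a_{m+1}}^{(n)}$ via~\eqref{formula for first nonzero component}, obtaining the same telescoped expression and the same positivity argument from $p_w(a_m)\ge p_u(a_m)=p_u(a_{m+1})+1$. Your bookkeeping of the partial-sum shifts is slightly more explicit than the paper's, but the content is identical.
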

	\section{Toward Weight Distribution for Successive Cancellation List Decoding}\label{wd for SCL}
	Let us briefly recall the high level description of  the successive cancellation  list (SCL) decoder~\cite{tal2015list} with list size $L$. At the $i$th decoding stage for $i\in\A$, we split each path $\hat\u_0^{i-1}$ from the list of candidates, abbreviated by $\L$, 
	into two paths by taking $\hat u_{i}=0$ and $\hat u_{i}=1$ and calculate two values $W_N^{(i)}(\y,\hat\u_0^{i-1}|0)$ and $W_N^{(i)}(\y,\hat\u_0^{i-1}|1)$ by~\eqref{calculate probabilities}.
	Since the number of paths is doubled, we keep in $\L$ only  the $L$ most likely paths at each stage. The pruning criterion is based on the values 
	$$
	\left\{W_N^{(i)}(\y,\hat\u_0^{i-1}|0),\ W_N^{(i)}(\y,\hat\u_0^{i-1}|1)\right\}_{\hat\u_0^{i-1}\in\L}.
	$$
	 If index $i\in\A^c$, then for any path $\hat \u_0^{i-1}$ from $\L$, the decoder makes a bit decision $\hat u_i = 0$ and keeps this $\hat\u_0^{i}$ in $\L$.
	
	Assume that after the $j$th decoding stage, the SCL decoder keeps (at least) the following two paths: true path $\0_0^j$ and path $\u_0^j$ mistaken in only two positions $i$ and $j$, $i<j$, i.e., $\s(\u_0^j)=\{i,j\}$. Our goal is to estimate the minimal distance between sets induced by these two paths. For simplicity of notation we abbreviate $C^{(n)}(\u_0^{j})$ by $C^{(n)}(i,j)$.  Also, recall that
	$$
	C^{(n)}(i,j) = \g_{i} + \g_{j} +\langle \g_{j+1}, \ldots, \g_{N-1}\rangle.
	$$
	\begin{theorem}\label{theorem on minimal distance}
	Let $N = 2^n$. For any $i, j \in [0, N)$ such that $i < j$ and any $\x \in C^{(n)}(i,j)$, we have $\w(\x) \geq \w(\g_i + \g_j)$, where
		\begin{equation}\label{eq::weight of sum}
		\w(\g_i+\g_j)= \w(\g_i)+\w(\g_j) - 2^{t_{i,j}+1}
		\end{equation}
		with $t_{i,j}:= \sum_{k=0}^{n-1}b_k(i) b_k(j)$.
	\end{theorem}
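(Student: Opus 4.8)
The plan is to prove the two assertions of the theorem in turn, starting with the closed form \eqref{eq::weight of sum}, which is the easy half. From $G_N=G_2^{\otimes n}$ and $(G_2)_{a,b}=1\iff b\le a$ one reads off that $\s(\g_k)$ is exactly the set of binary submasks of $k$, hence a downward-closed set of size $2^{p_{n-1}(k)}=\w(\g_k)$; consequently $\s(\g_i)\cap\s(\g_j)=\{m:m\text{ is a submask of }i\wedge j\}$ has $2^{t_{i,j}}$ elements, where $i\wedge j$ denotes the bitwise AND. Therefore $\w(\g_i+\g_j)=|\s(\g_i)\,\triangle\,\s(\g_j)|=\w(\g_i)+\w(\g_j)-2\,|\s(\g_i)\cap\s(\g_j)|=\w(\g_i)+\w(\g_j)-2^{t_{i,j}+1}$. (Alternatively, \eqref{eq::weight of sum} can be verified by the same induction on $n$ used below.)

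For the inequality $\w(\x)\ge\w(\g_i+\g_j)$ I would induct on $n$, applying the $|u|u+v|$ decomposition as in \eqref{basic lemma 1} and splitting on the pair $(b_{n-1}(i),b_{n-1}(j))$; the pair $(1,0)$ is impossible since $i<j$. If $(b_{n-1}(i),b_{n-1}(j))=(1,1)$, every $\x\in C^{(n)}(i,j)$ is a repetition $(\x',\x')$ with $\x'\in C^{(n-1)}(i-2^{n-1},j-2^{n-1})$, and $\g_i+\g_j$ is the repetition of the analogous sum, so the claim follows from the inductive hypothesis after halving. If $(b_{n-1}(i),b_{n-1}(j))=(0,0)$, one writes $\x=(\v+\q,\q)$ with $\v\in C^{(n-1)}(i,j)$ and $\q$ arbitrary, and $\w(\x)=\w(\v+\q)+\w(\q)\ge\w(\v)\ge\w(\g_i+\g_j)$, the last step being the inductive hypothesis for $\v$.

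The case $(b_{n-1}(i),b_{n-1}(j))=(0,1)$ is the one that requires a new idea. Put $j'=j-2^{n-1}$; then every $\x\in C^{(n)}(i,j)$ is of the form $(\g_i+\ww,\ww)$ with $\ww$ ranging over the coset $C^{(n-1)}(\0_0^{j'-1},1)=\g_{j'}+\langle\g_{j'+1},\dots,\g_{2^{n-1}-1}\rangle$ of $G_{2^{n-1}}$ (here $\g_i$ is read as a row of $G_{2^{n-1}}$, which changes neither $\s(\g_i)$ nor $\w(\g_i)$), and a coordinatewise count gives the exact identity $\w(\x)=\w(\g_i)+2\,|\s(\ww)\setminus\s(\g_i)|$. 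Taking $\ww=\g_{j'}$ shows $\w(\g_i+\g_j)=\w(\g_i)+2\,|\s(\g_{j'})\setminus\s(\g_i)|$, so this case reduces to the following auxiliary claim, which I regard as the heart of the theorem: for every $m$, every $j\in[0,2^m)$, every set $D\subseteq[0,2^m)$ downward-closed under the submask order, and every $\ww\in\g_j+\langle\g_{j+1},\dots,\g_{2^m-1}\rangle$, one has $|\s(\ww)\setminus D|\ge|\s(\g_j)\setminus D|$. For $D=\varnothing$ this is precisely the known minimum-weight property of the coset; the content is that the coset leader $\g_j$ simultaneously minimizes the weight measured only on the complement of an arbitrary downset of coordinates.

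I would prove this auxiliary claim by a second induction on $m$, again using $G_{2^m}=G_2\otimes G_{2^{m-1}}$ together with the decomposition $D=D_0\sqcup(D_1+2^{m-1})$, where $D_0,D_1\subseteq[0,2^{m-1})$ are downward-closed and $D_1\subseteq D_0$ (the inclusion is forced because $D$ is a downset). When $b_{m-1}(j)=1$, the coset element is a repetition $(\ww',\ww')$ and $|\s(\ww)\setminus D|=|\s(\ww')\setminus D_0|+|\s(\ww')\setminus D_1|$ is bounded below termwise by the inductive hypothesis. When $b_{m-1}(j)=0$, the element is $(\v+\q,\q)$ with $\v$ in the dimension-$(m-1)$ coset of $\g_j$; one restricts the right-half contribution to coordinates $c\notin D_0$ (permissible since $D_1\subseteq D_0$), and then on each left-half coordinate $c\notin D_0$ bounds the combined contribution $\mathbbm{1}[v_c\ne q_c]+\mathbbm{1}[q_c=1]$ from below by $\mathbbm{1}[v_c=1]$, which reduces everything to $|\s(\v)\setminus D_0|\ge|\s(\g_j)\setminus D_0|$, the inductive hypothesis once more. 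The main obstacle, and the only nonroutine point in the argument, is recognizing that the plain coset minimum distance is too weak in the mixed-top-bit case and that one must instead control the weight on the complement of a downward-closed coordinate set; once that auxiliary claim is isolated, both inductions are mechanical, and the base cases ($n=1$ and $m=0$, where the relevant cosets are singletons) are trivial.
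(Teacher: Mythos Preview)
Your proof is correct but follows a different route from the paper's. For \eqref{eq::weight of sum} you give a one-line argument via the submask description of $\s(\g_k)$; the paper instead verifies $|\s(\g_i)\cap\s(\g_j)|=2^{t_{i,j}}$ by induction on $n$. For the main inequality both proofs induct on $n$, but the paper does not always split on the top bit. Two preparatory lemmas locate an index $\ell\in[0,n)$ with $b_\ell(i)\le b_\ell(j)$ and, crucially, $f_\ell(i)<f_\ell(j)$, where $f_\ell$ deletes bit $\ell$; the second condition keeps the projected pair ordered, so the inductive hypothesis on $C^{(n-1)}(f_\ell(i),f_\ell(j))$ applies directly without any strengthening, with the single exceptional pair $(i,j)=(N/2-1,N/2)$ handled separately. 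You instead fix $\ell=n-1$, which in the mixed case $(b_{n-1}(i),b_{n-1}(j))=(0,1)$ loses the ordering and forces you to strengthen the inductive hypothesis to your downset claim (the coset leader $\g_j$ minimizes $|\s(\ww)\setminus D|$ for every downward-closed $D$, the inclusion $D_1\subseteq D_0$ being exactly what makes the recursion close). The tradeoff is clean: the paper keeps the hypothesis equal to the theorem statement at the cost of two combinatorial lemmas to find a good bit position; you keep the recursion uniform and purely top-bit at the cost of an auxiliary lemma that is strictly stronger than the theorem but no harder to prove, and which may be of independent interest.
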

Before we start proving the theorem, let us introduce several useful definitions. Recall the notion $b_r(k)$, the $r$th bit in the binary representation of integer $k$. Let $\gg_0:=(1,0)$ and $\gg_1:=(1,1)$. In other words, the first row of $G_2$ is $\gg_0$ and the second one is $\gg_1$.  By induction, any row $\g_k$ in the matrix $G_N=G_2^{\otimes n}$ can be shown to be a Kronecker product of vectors
\begin{equation}\label{eq::representation of gi}
\g_k=\gg_{b_{n-1}(k)}\otimes \gg_{b_{n-2}(k)}\otimes \ldots \otimes \gg_{b_{0}(k)}.
\end{equation} 

Let $\ell$ be an arbitrary integer such that $0\le\ell<n$. By $I_\ell\subset [0,N)$ denote the collection of indices $k\in [0,N)$ such that $b_\ell(k)=1$. Define the complement of $I_\ell$ by $I_\ell^c\defeq[0,N)\setminus I_\ell$. We note that $|I_\ell| = |I_\ell^c|= N/2$. 

Define the function $f_\ell: [0,N)\to [0,N/2)$ that maps an integer $k\in[0,N)$ to the integer $f_\ell(k)$ by removing the $\ell$th bit in the binary representation of $k$
$$
f_\ell(k):=\sum_{r=0}^{\ell-1} b_r(k)2^r + \sum_{r=\ell+1}^{n-1} b_r(k)2^{r-1}.
$$
Note that in the second sum, $b_r(k)$ is multiplied by $2^{r-1}$ and not $2^{r}$. Moreover, in the
special case where $\ell = n - 1$, we have $f_{n-1}(k) = k - b_{n-1}(k) 2^{n-1}$.

Let vector $\x$ be of length $N$ and let $I \subset [0, N)$ be a set of indices in ascending order.
Then, $\x|_{I}$ is a projection of $\x$ onto the coordinates indexed by $I$. That is, $\x|_{I}$ is a vector of length $|I|$ in which any index of $\x$ that is not in $I$ is removed. For example, if $N = 8$, $\x = (0, 0, 1, 0, 1, 1, 0, 1)$, and $I = \{1, 3, 5, 7\}$, then $\x|_{I} = (0, 0, 1, 1)$.
	
Denote $Q:=G_{N/2}$ with rows $\left\{\q_k,\,k\in[0,N/2)\right\}$. 
From the structure~\eqref{eq::representation of gi} and the definition of $I_\ell^c$, we note that
\begin{align}
\g_{k}|_{I_\ell^c}&=\gg_{b_{n-1}(k)}\otimes \ldots\otimes\gg_{b_{\ell+1}(k)}\otimes 1 \otimes\ldots \otimes \gg_{b_{0}(k)}\nonumber\\&=\q_{f_\ell(k)}. \label{eq::important 1}
\end{align}
 Similarly, from the structure~\eqref{eq::representation of gi} and the definition of $I_\ell$, we get that
\begin{equation}\label{eq::important 2}
\g_{k}|_{I_\ell}=\begin{cases}
\0_0^{N/2-1}\quad&\text{if } b_\ell(k)=0,\\
\q_{f_\ell(k)}\quad&\text{if } b_\ell(k)=1.
\end{cases}
\end{equation}
Observe that by~\eqref{eq::important 1}-\eqref{eq::important 2}, $\w(\g_k|_{I^c_\ell}) = \w(\g_k)$  if $b_\ell(k) = 0$. 

The following example demonstrates the definitions and
concepts introduced above.
\begin{exmp}	
	Let $N = 8$, $\ell=0$, $i=2= 0\cdot2^2+1\cdot2^1+0\cdot2^0$ and $j = 5=1\cdot2^2+0\cdot2^1+1\cdot2^0$. Then the vectors $\g_i$ and $\g_j$ can be represented using~\eqref{eq::representation of gi} as
	\begin{align*}
	\g_i=\gg_0 \otimes \gg_1 \otimes \gg_0=&(1,0,1,0,0,0,0,0),\\ \g_j= \gg_1 \otimes \gg_0 \otimes \gg_1 =&(1,1,0,0,1,1,0,0).
	\end{align*}
	Similarly, we have that $\q_1 = \gg_0 \otimes \gg_1 = (1, 1, 0, 0)$ and $\q_2 = \gg_1 \otimes \gg_0 = (1, 0, 1, 0)$.
	The set $I_\ell=\{1,3,5,7\}$  consists of all integers in $[0, N)$ whose $\ell$th bit (least significant bit in this case) in the binary representation is one $(1)$.  The set $I^c_\ell=[0,8)\setminus I_\ell=\{0,2,4,6\}$.   One can see that 
\begin{align*}
f_\ell(0)&=f_\ell(1)=0, \quad f_\ell(2)=f_\ell(3)=1,\\
f_\ell(4)&=f_\ell(5)=2,\quad f_\ell(6)=f_\ell(7)=3.
\end{align*}
	By~\eqref{eq::important 1} and~\eqref{eq::important 2} the restrictions of $\g_i$ and $\g_j$ to coordinates indexed by $I_\ell^c$ and $I_\ell$ are the vectors
	\begin{align*}
	\g_i|_{I_\ell^c}&=(1,0)\otimes(1,1)\otimes(1)
	 = \q_{f_\ell(i)}=\q_1,\\ \g_i|_{I_\ell} &=(1,0)\otimes(1,1)\otimes(0)
	 =\0_0^3,\\
	\g_j|_{I_\ell}=\g_j|_{I_\ell^c}&=(1,1)\otimes(1,0)\otimes(1)
	= \q_{f_\ell(j)}=\q_2.
	\end{align*}
\end{exmp}
\begin{lemma}\label{lem:: some integers i and j}
	Given $i,j\in[0,N)$ with $i<j$ and $N=2^n$, let $n_0\in[0,n)$ be the largest index such that $b_{n_0}(i)=0$ and $b_{n_0}(j)=1$. Then one of the two statements holds:
	\begin{enumerate}
		\item There exists $\ell\in[0,n)\setminus\{n_0\}$ such that $b_\ell(i)\le b_\ell(j)$ and $f_\ell(i)< f_\ell(j)$.
		\item $i=N/2-1$ and $j = N/2$.
	\end{enumerate}
\end{lemma}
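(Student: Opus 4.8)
The plan is to distinguish two cases according to whether $n_0 = n-1$ or $n_0 < n-1$. First I would record the elementary observation that $n_0$ is exactly the most significant position at which the binary expansions of $i$ and $j$ disagree: since $i<j$, at that position $i$ carries a $0$ and $j$ a $1$, and above it the bits of $i$ and $j$ coincide, so no position larger than $n_0$ can satisfy $b(i)=0,\,b(j)=1$. In particular $b_k(i)=b_k(j)$ for every $k>n_0$.

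Case $n_0<n-1$. Here I would pick any $\ell$ with $n_0<\ell\le n-1$ (say $\ell=n_0+1$); such $\ell$ lies in $[0,n)\setminus\{n_0\}$. Since $\ell>n_0$ we have $b_\ell(i)=b_\ell(j)$, so the required condition $b_\ell(i)\le b_\ell(j)$ holds. For the strict inequality $f_\ell(i)<f_\ell(j)$ I would compare the two integers bit by bit. Deleting position $\ell$ leaves every bit below $\ell$ in place and shifts every bit above $\ell$ down by one; hence in both $f_\ell(i)$ and $f_\ell(j)$ every position strictly above $n_0$ carries a bit that originates from a position $>n_0$ of the respective original integer, and those bits agree, while position $n_0$ itself carries $b_{n_0}(i)=0$ in $f_\ell(i)$ and $b_{n_0}(j)=1$ in $f_\ell(j)$. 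Thus $f_\ell(i)$ and $f_\ell(j)$ agree above position $n_0$ and differ there with $f_\ell(i)$ smaller, so $f_\ell(i)<f_\ell(j)$ and the first alternative of the lemma holds.

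Case $n_0=n-1$. Now $b_{n-1}(i)=0$ and $b_{n-1}(j)=1$, so $i<2^{n-1}=N/2\le j$. I would then ask whether there exists $\ell\in[0,n-1)$ with $b_\ell(i)\le b_\ell(j)$. If such an $\ell$ exists, note first that $\ell\ne n_0$; moreover, deleting the bit at position $\ell\le n-2$ from $i<2^{n-1}$ produces an integer whose highest possible bit is at position $n-3$, so $f_\ell(i)<2^{n-2}$, whereas deleting the bit at position $\ell\le n-2$ from $j$ keeps the top bit of $j$ (originally at position $n-1$) at position $n-2$, so $f_\ell(j)\ge 2^{n-2}$; hence $f_\ell(i)<2^{n-2}\le f_\ell(j)$ and again the first alternative holds. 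If, on the other hand, no such $\ell$ exists, then $b_\ell(i)=1$ and $b_\ell(j)=0$ for every $\ell\in[0,n-1)$, which forces $i=2^{n-1}-1=N/2-1$ and $j=2^{n-1}=N/2$, i.e.\ the second alternative.

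The argument is essentially bookkeeping once the split on $n_0$ is made. The step I would treat most carefully is the one in the case $n_0=n-1$: the point is that there $f_\ell(i)<f_\ell(j)$ is automatic for \emph{every} admissible $\ell$ by the crude size bound $f_\ell(i)<2^{n-2}\le f_\ell(j)$, so the whole case collapses to locating a low position $\ell$ with $b_\ell(i)\le b_\ell(j)$ — and the only way that search can fail is when the low $n-1$ bits of $i$ are all ones and those of $j$ are all zeros, which is exactly the exceptional pair $(N/2-1,\,N/2)$.
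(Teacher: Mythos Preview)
Your proof is correct and follows essentially the same case split as the paper: $n_0<n-1$ versus $n_0=n-1$, with the second case handled identically (the crude bound $f_\ell(i)<2^{n-2}\le f_\ell(j)$, and the exceptional pair when no admissible $\ell$ exists). The only difference is that in the first case the paper chooses $\ell=n-1$ specifically, which makes the verification of $f_\ell(i)<f_\ell(j)$ a one-liner via $f_{n-1}(k)=k-b_{n-1}(k)2^{n-1}$ and $b_{n-1}(i)=b_{n-1}(j)$, whereas you allow any $\ell>n_0$ and argue by a bit-by-bit comparison; both work.
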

\begin{proof}[Proof of Lemma~\ref{lem:: some integers i and j}]
If $n_0<n-1$, we can take $\ell=n-1$ and the first property on the lemma holds. Indeed, $b_\ell(i)=b_\ell(j)$ and 
$$
f_\ell(i)=i-b_{n-1}(i)2^{n-1}<j-b_{n-1}(j)2^{n-1}=f_\ell(j).
$$

Now assume $n_0=n-1$. If the second statement holds, we are done. Otherwise, there must exist $\ell < n - 1$ for
which $b_\ell(i) \le b_\ell(j)$. In this case, we have $f_\ell(i) < 2^{n-2} \le f_\ell(j)$.
\end{proof}	
Let us illustrate how the above lemma works.
\begin{exmp}
	Let $N=4$. We provide Table~\ref{tab::example} showing the integer $n_0$ from Lemma~\ref{lem:: some integers i and j} and a possible choice of $\ell$ for a given pair $(i,j)$. For the case $(i,j)=(1,2)$, we have the second property in Lemma~\ref{lem:: some integers i and j}.
	\begin{table}
		\centering
	\caption{Example demonstrating Lemma~\ref{lem:: some integers i and j}}
	\begin{tabular}{|c|c|c|c|c|c|}
		\hline 
		$(i,j)$& $(0,1)$ & $(0,2)$ & $(0,3)$ & $(1,3)$ & $(2,3)$ \\ 
		\hline 
		base two& $(00,01)$ & $(00,10)$ & $(00,11)$ & $(01,11)$ & $(10,11)$\\
		\hline 
		$\ell$ & $1$ & $0$ & $0$ & $0$ & $1$ \\ 
		\hline 
		$n_0$ & $0$ & $1$ & $1$ & $1$ & $0$ \\ 
		\hline 
	\end{tabular} 
\label{tab::example}
\end{table}
\end{exmp}
\begin{lemma}\label{lem:: some integers j and k}
	Given $j,k\in[0,N)$ with $j<k$, let $\ell$ be an arbitrary integer from $[0,n)$ such that $b_\ell(j)\ge b_\ell(k)$. Then $f_\ell(j)<f_\ell(k)$.
\end{lemma}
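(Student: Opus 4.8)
The plan is to reduce the claim to a short piece of base-two arithmetic built around the ``delete the $\ell$th bit'' operation defining $f_\ell$. For $m\in[0,N)$ write $m = 2^{\ell+1}H_m + 2^{\ell}\,b_\ell(m) + L_m$ with $H_m := \lfloor m/2^{\ell+1}\rfloor$ (the bits of $m$ strictly above position $\ell$), $b_\ell(m)\in\{0,1\}$ (the bit being deleted), and $L_m := m \bmod 2^{\ell}\in[0,2^{\ell})$ (the bits of $m$ strictly below position $\ell$). Straight from the definition of $f_\ell$ this gives the identity $f_\ell(m) = 2^{\ell}H_m + L_m$, so the task is to compare $2^{\ell}H_j + L_j$ with $2^{\ell}H_k + L_k$, given $j<k$ and $b_\ell(j)\ge b_\ell(k)$.

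First I would note $H_j\le H_k$: otherwise $H_j\ge H_k+1$, hence $j\ge 2^{\ell+1}(H_k+1) > 2^{\ell+1}H_k + 2^{\ell+1} - 1 \ge k$, contradicting $j<k$. If $H_j<H_k$, the conclusion is immediate and uses nothing about the deleted bits: $f_\ell(j) = 2^{\ell}H_j + L_j \le 2^{\ell}(H_k-1) + (2^{\ell}-1) = 2^{\ell}H_k - 1 < 2^{\ell}H_k \le f_\ell(k)$, since $0\le L_j\le 2^{\ell}-1$ and $L_k\ge 0$.

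The only remaining case is $H_j = H_k$, and this is where the hypothesis $b_\ell(j)\ge b_\ell(k)$ is actually used. When $H_j=H_k$, the inequality $j<k$ reduces to $2^{\ell}b_\ell(j) + L_j < 2^{\ell}b_\ell(k) + L_k$. Were $b_\ell(j)>b_\ell(k)$, i.e., $b_\ell(j)=1$ and $b_\ell(k)=0$, the left side would be at least $2^{\ell}$ while the right side is $L_k\le 2^{\ell}-1$ — impossible. Hence the hypothesis forces $b_\ell(j)=b_\ell(k)$, the displayed inequality collapses to $L_j<L_k$, and therefore $f_\ell(j) = 2^{\ell}H_j + L_j < 2^{\ell}H_k + L_k = f_\ell(k)$, completing the proof.

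I do not anticipate a real obstacle: the whole argument is bookkeeping once the decomposition $m = 2^{\ell+1}H_m + 2^{\ell}b_\ell(m) + L_m$ and the identity $f_\ell(m) = 2^{\ell}H_m + L_m$ are on the table. The single point requiring care is recognizing that the hypothesis $b_\ell(j)\ge b_\ell(k)$ does its work precisely in the sub-case $H_j=H_k$, where it rules out $b_\ell(j)=0$, $b_\ell(k)=1$ — and that configuration is exactly the one for which the statement would fail without the hypothesis (for example $\ell=1$, $j=1$, $k=2$ give $f_1(j)=1>0=f_1(k)$), so it must be excluded explicitly rather than assumed away.
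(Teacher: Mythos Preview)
Your proof is correct. The paper organizes the same idea differently: it identifies the most significant bit position $n_0$ at which $j$ and $k$ differ (necessarily $b_{n_0}(j)=0$, $b_{n_0}(k)=1$ since $j<k$), observes that the hypothesis $b_\ell(j)\ge b_\ell(k)$ forces $\ell\neq n_0$, and then notes that after deleting bit $\ell$ this same bit still survives as the most significant differing bit of $f_\ell(j)$ and $f_\ell(k)$ (shifted down by one if $\ell<n_0$, unshifted if $\ell>n_0$), giving $f_\ell(j)<f_\ell(k)$ immediately.

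Your high/low decomposition $m=2^{\ell+1}H_m+2^{\ell}b_\ell(m)+L_m$ is the arithmetic counterpart of the same split: your case $H_j<H_k$ is exactly the paper's case $n_0>\ell$, and your case $H_j=H_k$ is the paper's case $n_0<\ell$. The paper's version is a one-line ``the leading differing bit survives the deletion'' observation; yours spells out the inequalities explicitly and makes clear why the hypothesis is needed only in the $H_j=H_k$ sub-case. Neither has a real advantage in length or generality here; they are two phrasings of the same bookkeeping.
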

\begin{proof}[Proof of Lemma~\ref{lem:: some integers j and k}] Since $j<k$, for the largest integer $n_0$ so that $b_{n_0}(j)\neq b_{n_0}(k)$, we have $b_{n_0}(j)=0<1=b_{n_0}(k)$. Thus, necessarily $\ell \neq n_0$. Let $n_0'$ be the largest integer such that $b_{n_0'}(f_\ell(j)) \neq b_{n_0'}(f_\ell(k))$. 
		 If $\ell < n_0$ then $n_0' = n_0 - 1$, and if $\ell > n_0$ then $n_0' = n_0$. In either case, $b_{n_0'}(f_\ell(j)) = b_{n_0}(j) = 0 < 1 = b_{n_0}(k) = b_{n_0'}(f_\ell(k))$ and, thus, $f_\ell(j) < f_\ell(k)$.
\end{proof}
	\begin{proof}[Proof of Theorem~\ref{theorem on minimal distance}]
		We shall prove the inequality $\w(\x)\ge \w(\g_i+\g_j)$ for any $\x\in C^{(n)}(i,j)$ by induction on $n$, where $N=2^n$. Showing~\eqref{eq::weight of sum} is postponed to the end of the proof.
	The base case $n=1$ is obviously true as $i, j \in [0, 2)$ and condition $i < j$ can be satisfied only for $i = 0, j = 1$. Thus, the set $C^{(1)}(i, j)$ contains only the single element $\g_0 + \g_1$, and, hence, its minimal weight equals $\w(\g_0 + \g_1)$. Assume that the statement holds for $n-1$.  We prove that it is true for $n$. 
		
		Set $N = 2^n$ and let $i, j \in [0, N)$ such that $i < j$. Any binary vector $\x\in C^{(n)}(i,j)$ can be determined by coefficients $(\alpha_k)_{k=j+1}^{N-1} \in\{0,1\}^{N-j-1}$ in the following way
	$$
		\x=\g_i + \g_j+\sum_{k\in[j+1, N) }\alpha_k \g_k.		
	$$
	By Lemma~\ref{lem:: some integers i and j}, we can have two possibilities. If  $i = N/2-1$, then, by the structure of $G_N$ and since $j > i$, $\w(\g_i + \g_j) = \w(\g_i) = N/2$.   As in~\eqref{basic lemma 1}-\eqref{basic lemma 2}, $\w(\x) \ge \w(\g_i) = N/2$. Thus, in this case the statement is true. Otherwise,
	$i \neq N/2-1$ and the first statement of Lemma~\ref{lem:: some integers i and j} must hold. Namely, there exists $\ell\in[0,n)$ such that $b_\ell(i)\le b_\ell(j)$ and $f_\ell(i)< f_\ell(j)$. Define
	\begin{align}
	\x_0&:=\g_i\mathbbm{1}_0(b_\ell(i)) +\g_j\mathbbm{1}_0(b_\ell(j))+\sum_{k\in[j+1, N)}\alpha_k \g_k \mathbbm{1}_0(b_\ell(k)),\nonumber\\
	 \x_1&:=\g_i\mathbbm{1}_1(b_\ell(i)) +\g_j\mathbbm{1}_1(b_\ell(j))+\sum_{k\in[j+1, N)}\alpha_k \g_k  \mathbbm{1}_1(b_\ell(k)), \label{eq::formula for x0 and x1}
	\end{align}
	where the indicator function $\mathbbm{1}_c(x)=1$ if $x=c$ and $\mathbbm{1}_c(x)=0$ if $x\neq c$. It is readily seen that $\x = \x_0+\x_1$. From~\eqref{eq::important 1}-\eqref{eq::important 2} we note that
$$
	\x_0|_{I_\ell}=\0_0^{N/2-1},\quad \x_1|_{I_\ell}=\x_1|_{I_\ell^c}.
$$
	From this it follows
	 	\begin{align}
	\w(\x)&=\w(\x|_{I_\ell})+\w(\x|_{I^c_\ell})\nonumber\\
	&=\w(\x_1|_{I_\ell})+\w(\x|_{I_\ell^c})\nonumber \\&=\w(\x_1|_{I_\ell})+\w(\x_0|_{I^c_\ell}+\x_1|_{I_\ell}).\label{eq::weight of x}
	 \end{align}
	 Recall that $b_\ell(i)\le b_\ell(j)$ and $f_\ell(i)<f_\ell(j)$. We distinguish the further analysis between two cases: $b_\ell(j)=0$ and $b_\ell(j)=1$.
	
	\textbf{Case 1:} $b_\ell(j)=0$. It follows that $b_\ell(i)=0$.
	By applying \eqref{eq::important 1} to $\x_0$ from~\eqref{eq::formula for x0 and x1}, we get
	$$
	\x_0|_{I_\ell^c}=\q_{f_\ell(i)}+\q_{f_\ell(j)}+\sum_{k\in[j+1, N)}\alpha_k \q_{f_\ell(k)}  \mathbbm{1}_0(b_\ell(k)).
	$$
	By Lemma~\ref{lem:: some integers j and k},  $f_\ell(j)<f_\ell(k)$ for $k$ such that $k>j$ and $b_\ell(k)=b_\ell(j)=0$, and, thus, $\x_0|_{I_\ell^c} \in C^{(n-1)}(f_\ell(i), f_\ell(j))$. Therefore, we are able to apply the inductive assumption for $\x_0|_{I_\ell^c}$ and obtain that
	\begin{align}
	\w(\x_0|_{I_\ell^c})&\ge \w(\q_{f_\ell(i)}+\q_{f_\ell(j)})\nonumber\\
	&=\w\left((\g_{i}+\g_{j})|_{I_\ell^c}\right)\nonumber\\
	&=\w\left(\g_{i}+\g_{j}\right),\label{eq::short x0}
	\end{align}
	where the first equality follows from~\eqref{eq::important 1} and the right-most one is implied by~\eqref{eq::important 2} and $b_\ell(i)=b_\ell(j)=0$. Finally, by~\eqref{eq::weight of x} we conclude with
\begin{align*}
	\w(\x)&=\w(\x_1|_{I_\ell})+\w(\x_0|_{I^c_\ell}+\x_1|_{I_\ell}) \\
	&\ge \w(\x_0|_{I_\ell^c})\\&\ge  \w(\g_{i}+\g_{j}),
\end{align*}
where the first inequality is an obvious observation that the weight of $(\x+\y,\y)$ is at least $\w(\x)$ (see similar arguments in~\eqref{basic lemma 1}-\eqref{basic lemma 2}), and the right-most one is implied by~\eqref{eq::short x0}.

\textbf{Case 2:} $b_{\ell}(j) = 1$.	By applying \eqref{eq::important 1} to $\x$, we derive
$$
\x|_{I_\ell^c}=\q_{f_\ell(i)}+\q_{f_\ell(j)}+\sum_{k\in[j+1, N)}\alpha_k \q_{f_\ell(k)}.
$$
	 By Lemma~\ref{lem:: some integers j and k} and since $b_\ell(j)=1\ge b_{\ell}(k)$ for any $k$, we obtain that $f_\ell(j)<f_\ell(k)$ for $k>j$ and, thus,  $\x|_{I_\ell^c}\in C^{(n-1)}(f_\ell(i), f_\ell(j))$. Then we apply the inductive assumption for  $\x|_{I_\ell^c}$ and derive that
\begin{equation}\label{eq::short x0 and x1}
\w(\x|_{I_\ell^c})\ge \w(\q_{f_\ell(i)}+\q_{f_\ell(j)}).
\end{equation}
Now we take a look on the weight of $\x_1|_{I_\ell}$, where $\x_1$ is defined in~\eqref{eq::formula for x0 and x1}. By~\eqref{eq::important 2} we have 
\begin{align*}
\x_1|_{I_\ell}&=\q_{f_\ell(i)} \mathbbm{1}_1(b_\ell(i))+\q_{f_\ell(j)}\\ &+ \sum_{k\in[j+1, N)}\alpha_k \q_{f_\ell(k)} \mathbbm{1}_1(b_\ell(k)).
\end{align*}
For $b_\ell(i)=0$, we obtain $\x_1|_{I_\ell}\in C^{(n-1)}(\0_0^{f_\ell(j)-1},1)$ and by Theorem~\ref{th::first component}, we have 
\begin{equation}\label{eq::second summand with 0}
\w(\x_1|_{I_\ell})\ge\w(\q_{f_\ell(j)}) = \w(\q_{f_\ell(i)}\mathbbm{1}_1(b_\ell(i))+\q_{f_\ell(j)}).
\end{equation}
For $b_\ell(i)=1$, we get $\x_1|_{I_\ell}\in C^{(n-1)}(f_\ell(i), f_\ell(j))$ and
\begin{align}
\w(\x_1|_{I_\ell})&\ge \w(\q_{f_\ell(i)}+\q_{f_\ell(j)}) \nonumber\\&=  \w(\q_{f_\ell(i)}\mathbbm{1}_1(b_\ell(i))+\q_{f_\ell(j)}).\label{eq::second summand with 1}
\end{align}
Finally by~\eqref{eq::weight of x} we conclude with
\begin{align*}
\w(\x)&=\w(\x_1|_{I_\ell})+\w(\x|_{I_\ell^c})
\\&\ge \w(\q_{f_\ell(i)}+\q_{f_\ell(j)}) + \w\left (\q_{f_\ell(i)}\mathbbm{1}_1(b_\ell(i))+\q_{f_\ell(j)}\right) \\&= \w(\g_{i}+\g_{j}),
\end{align*}
where the inequality is due to~\eqref{eq::short x0 and x1}-\eqref{eq::second summand with 1}, and the right-most equality is implied by~\eqref{eq::important 1}-\eqref{eq::important 2}. This completes the proof of cases 1 and 2, and we are now ready to tackle~\eqref{eq::weight of sum}.

For any $i$ and $j$, we have that $\w(\g_i+\g_j) = \w(\g_i) + \w(\g_j) - 2 T_{i,j}$, where $T_{i,j}$ denotes the number of positions at which  $\g_i$ and $\g_j$ are both one, i.e. $T_{i,j}:=|\s(\g_i)\cap \s(\g_j)|$.  Thus, to verify~\eqref{eq::weight of sum}, it suffices to check that $T_{i,j}=2^{t_{i,j}}$  for any $i$ and $j$. We shall check this equality by induction on $n$.
 
For the base case $n=1$, we have either $(i,j)\in\{(0,0), \ (0,1),\ (1,0)\}$ with $t_{i,j}=0$ and $T_{i,j}=1$, or $(i,j)=(1,1)$ with $t_{i,j}=1$ and $T_{i,j}=2$. Assume the statement holds for $n-1$. Let us show that it is true for $n$. By \eqref{eq::representation of gi}, we have that
$$
\g_i=\gg_{b_{n-1}(i)}\otimes \q_{f_{n-1}(i)},\quad \g_j=\gg_{b_{n-1}(j)}\otimes \q_{f_{n-1}(j)},
$$
where $\gg_0=(1,0)$ and $\gg_1=(1,1)$. This implies  the property
$|\s(\g_i)\cap \s(\g_j)|=
2|\s(\q_{f_{n-1}(i)})\cap \s(\q_{f_{n-1}(j)})|$ if  $b_{n-1}(i)=b_{n-1}(j) = 1$,  and 
$|\s(\g_i)\cap \s(\g_j)|=|\s(\q_{f_{n-1}(i)})\cap \s(\q_{f_{n-1}(j)})|$ otherwise.
By the inductive assumption, we obtain that 
\begin{align*}
T_{i,j} = \ &|\s(\g_i)\cap \s(\g_j)|\\
=\ &2^{b_{n-1}(i) b_{n-1}(j)}|\s(\q_{f_{n-1}(i)})\cap \s(\q_{f_{n-1}(j)})|\\
=\ & 2^{b_{n-1}(i) b_{n-1}(j)} 2^{t_{f_{n-1}(i),f_{n-1}(j)}} \\
=\ & 2^{\sum_{k=0}^{n-1}b_{k}(i) b_{k}(j)}\\
=\ & 2^{t_{i,j}}.
\end{align*}
This completes the proof.
	\end{proof}

	\section{Conclusion}\label{conclusion}
	In this paper, we discuss the exact weight distribution of the coset associated with each synthetic channel $W_N^{(i)}$. Also, we find the minimal distance between cosets associated with paths that differ in two positions for successive cancellation list decoding. This study represents initial steps towards understanding the performance of polar codes under successive cancellation list decoding.
	
	The approximate union bound~\eqref{union bound} takes into account only weight distributions of the coset $C^{(n)}(\0_0^{i-1}, 1)$. The drawback of this approach is evident: for low and medium signal-to-noise ratio, the estimate cannot be tight. Based on Remark~\ref{rem::computing wd for zero coset}, the weight distributions of any zero coset $C^{(n)}(\0_0^{i-1}, 0)$  can be calculated. However, we do not know how to use it in order to get a more accurate bound of the error probability.
	It is still unknown how to calculate efficiently the minimal weight word (and the number of minimal weight words) of a set 
	$C^{(n)}(\u_0^{i-1})$ for an arbitrary $\u_0^{i-1}$. We believe that such an analysis can be helpful for constructing polar codes under successive cancellation list decoding and estimating the distance spectrum of polar codes. 
	
	\bibliographystyle{IEEEtran}
	\bibliography{root}
	
\end{document}